\documentclass{paper}
\usepackage[utf8]{inputenc}
\usepackage{hyperref}
\usepackage{pgfplots}
\usepackage{enumerate}
\usetikzlibrary{matrix}   
\usepackage{setspace}
\usetikzlibrary{calc} 
\usepackage{natbib}
\usepgfplotslibrary{groupplots}
\pgfplotsset{width=10cm,
            compat=1.9,
            }
\let\underbrace\LaTeXunderbrace

\providecommand{\F}{\mathcal{F}}
\newcommand{\E}{\mathbb{E}}
\newcommand{\Prob}{\text{Prob}}
\author{
Bhargav Nagaraja Bhatt\thanks{Morpho, bhargav@morpho.org} \and
Jonas Gehrlein\thanks{Parity Technologies, \{jonas,kremena\}@parity.io.} \and
Kremena Valkanova\footnotemark[3]
}

\date{\today}            

\title{Optimal Dynamic Majority-Quorum Rules: Theory and Evidence\thanks{\protect\setstretch{1}\protect We thank Moumita Deb, Nick Netzer, Alistair Steward, Christoph Vanberg, and conference participants at the Swiss Theory Day (Zurich, 2026).}}

\begin{document}

\maketitle

\begin{abstract}
\begin{spacing}{1.2}
We study the optimal sequential aggregation of randomly arriving votes from a finite population to identify an unknown majority preference before a fixed deadline, trading off decision accuracy against the cost of delay. The optimal policy is a dynamic majority-quorum rule: a joint, time-varying threshold on vote lead and count, in which quorum and supermajority requirements emerge as the solution to an inference problem about non-participants' preferences. At each date, the continuation region is lens-shaped: the lead required to stop grows with turnout until too few voters remain to overturn it, and it vanishes at the deadline. We characterize conditions under which the boundary is monotone in the stakes, the arrival process, and time. Because the exact boundary admits no closed form, we derive a large-population approximation that reduces the two-dimensional rule onto a single sufficient statistic, and express the rule as a sequential significance test set by a single error tolerance. Calibrated on Polkadot governance data, an out-of-sample evaluation finds that our rule with a 2\% error tolerance halves decision times relative to the incumbent mechanism.
\vspace{0.1in}\\
\noindent\textbf{Keywords:} dynamic collective choice; optimal stopping; information aggregation; speed–accuracy trade-off; blockchain governance.\\
	\noindent\textbf{JEL Classification:} D71, D83, D82, C61.\\
\end{spacing}
\end{abstract}

\section{Introduction}
Collective decision-making entails an inherent trade-off between the accuracy of the final choice and the time taken to reach it. A firm polling its customers before changing a product or service benefits from acting early, as the benefits from the improvements can be realized sooner. Yet acting on too few responses risks adopting a change that the majority would ultimately reject. The same trade-off arises in decentralized autonomous organizations, where token holders vote on proposals to allocate treasury funds to grants and contractors: executing a proposal based on too few votes risks spending members’ funds on a project the electorate opposes, while delaying execution leaves resources idle and recipients waiting. A similar tension arises in post-election auditing, where election officials sequentially sample ballots to verify a reported outcome: stopping too early risks certifying the wrong winner, while prolonging the audit delays the transfer of political authority. Across these settings, the value of a decision depends not only on whether it is correct, but also on when it is made.

In these environments, and increasingly in collective decision-making processes enabled by digital technologies, information is observable as it arrives. As a result, collective decision-makers observe not only the eventual aggregate outcome, but also the path by which information accumulates. This creates a new opportunity for institutional design: rather than relying exclusively on fixed voting windows, participation thresholds, or predetermined sample sizes, collective decision rules can adapt to the evolving informational state of the process. Yet most institutions rely on static rules that underutilize the informational content of sequential participation.\footnote{Several notable exceptions reflect the need for more responsive decision rules. Commercial experimentation platforms \citep{Kohavi2009} allow firms to terminate online experiments once accumulated user responses provide sufficient statistical confidence \citep{Johari2022}. In decentralized governance, Polkadot's OpenGov framework uses a dynamic approval-support rule, discussed in Section~\ref{sec:empirics}, and Arbitrum's Security Council elections down-weight votes cast late in the voting window, linearly to zero at the deadline \citep{arbitrum2023constitution}. Risk-limiting audits provide formal statistical stopping criteria for ballot sampling \citep{Stark2008, Lindeman2012}. Outside formal institutions, Wikipedia's ``snowball clause'' permits a discussion to be closed early, on the basis of editorial judgment, when its outcome is obvious \citep{wikipediaSNOW}.} This raises a natural question: how should collective decision rules exploit this information to balance the competing objectives of accuracy and timeliness?

The settings above illustrate a broader class of collective decision problems defined by three structural features. First, the population of voters is finite and exhaustible. Because each voter participates at most once, every vote cast reduces the pool of remaining voters and thereby changes the value of waiting for additional votes. Second, vote arrivals are exogenous to the planner. The timing of votes is determined by individual voters and may reflect participation frictions, cost–pivotality tradeoffs, and strategic timing considerations, among other factors. Third, decisions must be reached before a binding deadline. Such deadlines are often necessitated in practice by implementation needs, budget schedules, or legally prescribed certification dates.\footnote{The tension between finite sampling resources and hard deadlines is well documented in practice. For example, when Maryland evaluated audit methods, local boards of elections noted that risk-limiting audits were difficult to budget or staff for, since the required sample size depends on the margin of victory and a close contest ``could require days of staff work, possibly compromising the local certification deadline'' \citep{maryland2016}.} 

Existing theories provide only partial guidance for such environments: sequential-testing models in the tradition of \citet{Wald1945}, and their modern descendants \citep{Moscarini2001, Fudenberg2018}, draw evidence from an effectively inexhaustible source; dynamic collective-choice models and adaptive experimental designs allow the decision-maker to solicit votes or choose samples at will \citep{GershkovSzentes2009, Bognar2015, Toubia2004, Johari2022}; and finite-population sequential tests, including risk-limiting audits \citep{Stark2008, Lindeman2012}, impose no binding decision deadline.\footnote{See Section~\ref{sec:lit} for a detailed review of the literature.} 
In this paper, we derive optimal collective decision rules for finite societies with exogenous vote arrivals and a fixed decision deadline.

We formalize the problem in a model of dynamic preference aggregation in which a social planner must choose between two alternatives before a fixed deadline. The objective is to implement the true majority preference of a finite population of eligible voters. Voters hold fixed but unobserved preferences and vote at most once. Votes arrive asynchronously over time, reflecting heterogeneous opportunities for participation, in randomly sized batches at a rate that may vary across periods. Rather than modeling participation decisions, we take this arrival process as given and ask how the planner should act in response to the observed stream of votes: at each date, to implement the leading alternative or to continue waiting for additional votes. Implementing yields a reward if the chosen alternative matches the majority preference, while waiting provides further information at the cost of delay. 

Our central result characterizes the optimal policy as a dynamic majority--quorum rule, in which the vote lead required to act depends jointly on turnout and time remaining, such that a given lead may be decisive at high turnout but not at low. The rule takes a threshold form, \ie for each turnout level and date, the planner stops once the lead crosses a cutoff that collapses as the deadline approaches, and waits otherwise. The turnout dimension distinguishes the rule from classical sequential testing, where a one-dimensional running tally suffices. It arises because a given lead is consistent with many possible majorities when the remaining electorate is large, but becomes increasingly decisive as it shrinks. Thus, the result provides an informational microfoundation for quorum and supermajority requirements: both emerge as the solution to an inference problem about the preferences of non-participants, with the quorum marking the turnout at which the observed votes warrant a decision.\footnote{Quorum requirements are commonly justified by concerns about legitimacy, representativeness, and minority protection \citep{AguiarConraria2010, Aguiar2020}. A large literature studies their strategic consequences \citep{CorteReal2004, Herrera2010, AguiarConraria2010}, their optimal level and form \citep{Maniquet2015, Matveenko2022, Vorobyev2025}, and the relative performance of alternative quorum rules \citep{Aguiar2020}. In this line of work, however, the presence of a quorum is taken as given. By contrast, we impose no turnout requirement ex ante. Instead, a quorum emerges endogenously from the planner's optimal stopping problem, providing a rationale for the turnout condition itself and determining its level as a function of the model's primitives.} The finite horizon lets us compute the exact boundary by backward induction, and the resulting continuation region is lens-shaped in lead--count space: narrow at low turnout, widest at intermediate turnout, and narrowing again as the shrinking residual electorate makes surviving leads decisive.

We then establish conditions under which the optimal threshold responds monotonically to the model's primitives. A higher reward relative to delay costs raises the threshold, since the value of additional votes scales with the reward while their cost is unchanged. By contrast, fewer expected arrivals at every date lower the threshold, since less remains to be gained from waiting. Importantly, the threshold depends on both the volume and the timing of participation: holding aggregate participation fixed, shifting arrivals toward the deadline can raise the threshold at intermediate dates, breaking its monotonicity over time. Ruling out such back-loading of the arrival process is also what guarantees, holding the other primitives fixed, that the threshold declines monotonically toward zero at the deadline.

Because the exact rule admits no closed form, we derive a tractable approximation that replicates the optimal policy in a large-population, high-stakes limit. The approximate rule stops once the squared signal-to-noise ratio of the tally, \ie the squared lead relative to the sampling variance that waiting can still eliminate, exceeds the Wald--Chernoff reward-to-cost term. The two rules disagree only in a band around the optimal boundary where the planner is nearly indifferent between deciding and waiting, whose relative width vanishes as stakes grow. Evaluating the threshold along the expected arrival path yields a posted-threshold rule that stops when the current tally's evidence exceeds a date-specific threshold given by the Wald--Chernoff level scaled by the share of votes still to arrive before the deadline. Because the reward-to-cost ratio is hard to quantify in practice, we restate the posted-threshold rule as a sequential significance test in which the planner specifies only an error tolerance and stops once the posterior probability of implementing the minority preference falls below it, with the required confidence relaxing as the deadline approaches.

We complement the theoretical analysis with an empirical evaluation using governance data from Polkadot \citep{wood2022polkadot}, one of the largest decentralized autonomous organizations by market capitalization. This setting is well suited to our framework on two counts. First, votes are aggregated through the blockchain, binding, and automatically executed, and the full time path of vote arrivals is observed for several hundred referenda decided by a large and heterogeneous electorate. Second, since Polkadot's incumbent mechanism is itself a dynamic majority-quorum rule, the observed arrival paths reflect voters' incentives over the timing of participation. Thus, the voting behavior we observe should resemble those the proposed rule would itself generate. We find that participation is sharply front-loaded but well described by a non-homogeneous Poisson process with a parsimonious log-linear intensity, with a one-parameter fit that describes the arrival paths well across referenda.

We then evaluate the rule out of sample, specifying only an error tolerance and calibrating all remaining parameters on the chronologically first half of the sample, held fixed thereafter. On the second half, we ask when the proposed mechanism would have decided each historical referendum. At 2\% error tolerance, the rule decides the median referendum on day 11 of the 28-day voting period, after roughly 60\% of eventual participants have voted, saving 15 days relative to the incumbent mechanism while differing from its deadline outcome in fewer than 2\% of referenda. This calibration is one point on a speed--accuracy frontier that is nearly identical in the training and test halves and steep at low error rates: near-perfect accuracy preserves most of the attainable time savings, while eliminating the last percentage point of error would forfeit the mechanism entirely.

The remainder of the paper is organized as follows. Section~\ref{sec:lit} reviews the related literature and Section~\ref{sec:model} introduces the model. We then characterize the optimal stopping rule and study its properties in Section~\ref{sec:result} and derive a closed-form approximation in Section~\ref{sec:approximation}. Section~\ref{sec:empirics} presents the empirical analysis. Section~\ref{sec:conclusion} concludes.

\section{Related Literature}
\label{sec:lit}

Our paper connects three strands of literature: statistical social choice, which frames collective decisions as inference problems; the design of voting rules for information aggregation, which frames the planner's problem of selecting a dynamic rule; and sequential decision-making under uncertainty, whose canonical free-boundary problems we extend to the collective-choice setting under finite population and binding horizon.

The first strand treats collective decisions as a problem of statistical inference, in which the aim is to recover an unknown population-level quantity from observed votes. In the classical Condorcet tradition \citep{condorcet1785, young1988} and its modern extensions \citep{nitzan1984, austen1996}, this latent quantity is interpreted as an external state of the world about which voters receive private signals. Subsequent work has extended this framework to strategic voting under private information \citep{feddersen1996rationality}, including within Poisson-game frameworks that accommodate population uncertainty \citep{Myerson1998}. We adopt the same inference perspective: votes reveal information about an unknown majority preference, and the planner aggregates that information from a finite pool. The source of uncertainty is the planner's inability to observe all votes at once, and resolving it through additional observation comes at a cost. We depart from this literature, however, in the question we ask. Whereas the jury-theorem tradition evaluates fixed voting rules and asks what equilibrium voting reveals about the underlying state, we treat the rule itself as the choice variable.

This design perspective is at the core of the second related strand of literature, which studies optimal decision rules when the planner sequentially gathers costly information. Foundational work in the social choice tradition characterizes optimal weighted-majority rules for groups of heterogeneous voters \citep{nitzan1982optimal}. Subsequent contributions have enriched the framework to allow for costly information acquisition \citep{Persico2004}, recast voting rules explicitly as Bayesian estimators of an underlying truth \citep{AzariSoufiani2014}. Most directly relevant, a line of work has developed the dynamic case in which the planner sequentially aggregates votes and decides when to stop \citep{GershkovSzentes2009, Bognar2015}. 
Our paper shares the planner's-problem perspective of this literature but differs in two main respects. First, the rate and order of vote arrivals are exogenous and arise from sampling without replacement from a finite, exhaustible pool: the planner neither controls when votes appear nor solicits them strategically. The planner's only choice variable is when to stop. Second, the deadline is hard rather than asymptotic, which makes the time horizon a binding constraint and rules out the standard infinite-horizon characterizations.

Finally, our paper sits within the broader literature on sequential statistical decisions and optimal stopping under uncertainty, which studies how a Bayesian decision-maker should accumulate noisy evidence over time and decide when to stop. The sequential probability ratio test of \citet{Wald1945} and the Bayesian sequential tests that followed \citep{Chernoff1961, Siegmund1985}, characterize optimal stopping with costly sampling from an unlimited source. \citet{Lai1979} and \citet{Xiong1993} extend this framework to the finite population case that is most relevant to our setting. This structure arises in many different domains: optimal experimentation \citep{Moscarini2001, bolton1999}, the drift-diffusion tradition in cognitive science and behavioral economics \citep{ratcliff_theory_1978, ratcliff_diffusion_2008, Drugowitsch2012, Tajima2016, Fudenberg2018}, the theory of American options and real-option problems \citep{Peskir2006}, finite-population sequential testing in post-election ballot audits \citep{Stark2008, Lindeman2012} and financial audit sampling \citep{Kato2026}. Our paper applies this machinery to collective choice. The signal is a vote drawn without replacement from a finite electorate, the decision-maker is a planner aggregating preferences, and the institutional setting imposes a hard deadline. The combination of finite population and binding horizon has not, to our knowledge, been studied jointly in this literature. We show that the collapsing boundaries familiar from single-agent decision-making have a direct counterpart in collective-choice rules under both constraints.

\section{Model}
\label{sec:model}

A social planner must decide whether to accept or reject a proposal on behalf of a society of an odd number of $N \in \mathbb{N}$ voters. Each voter has a fixed binary preference, either in favor of ($+$) or opposed to ($-$) the proposal. Let $M^+ \in \{0,1,\dots,N\}$ denote the (unknown) number of voters who support the proposal, with $M^- = N - M^+$.

The planner is benevolent and seeks to implement the majority-preferred alternative, \ie accept if  $M^+ > \frac{N}{2}$, reject otherwise. We assume the planner adopts a uniform prior $M^+ \sim \mathrm{Unif}\{0,\ldots,N\}$, which is equivalent to assuming each voter independently supports the proposal with a probability drawn uniformly from $[0,1]$.

\subsection{Sequential Vote Arrival}

Votes arrive over discrete dates $t = 1, 2, \dots, T$. On each date, a batch of votes is observed, consisting of $a_t^+$ favorable votes and $a_t^-$ unfavorable votes, with total batch size $a_t = a_t^+ + a_t^-$. We define cumulative vote counts as
\[
s_t^+ = \sum_{\tau=1}^t a_\tau^+, 
\qquad
s_t^- = \sum_{\tau=1}^t a_\tau^-,
\qquad
s_t = s_t^+ + s_t^-, \qquad d_t = |s_t^+ - s_t^-|.
\] 
The potential number of arrivals at date~$t$ follows a Poisson distribution with a time-varying rate $\lambda(t) > 0$, independently across dates, \ie $A_t \sim \mathrm{Poisson}(\lambda(t))$ for all $t = 1, \dots, T$.\footnote{This specification represents the asynchronous revelation of a finite pool of $N$ votes. Each vote arrives at a random time, independent of its content and of previously observed votes, generating Poisson uncertainty over participation in the spirit of \citet{Myerson1998}. The model captures decentralized participation subject to frictions such as attention or availability that disperse voting times across the horizon, though the same arrival process could also arise from random sampling by the planner. Relative to pure hypergeometric sampling, stochastic batch sizes allow both uncertainty about how many votes arrive on a given date and time-varying participation intensity, such as the front-loaded pattern documented in Section~\ref{sec:empirics}.}
We denote the expected arrival rate sequence by $\boldsymbol{\lambda} = (\lambda(1),\ldots,\lambda(T))$. The cumulative expected arrival function $\Lambda(t) = \sum_{\tau=1}^{t} \lambda(\tau)$, is such that $\mathbb{E}\!\left[\sum_{\tau=1}^t A_\tau\right] = \Lambda(t)$.
Because the population is finite and each voter votes at most once, the observed batch size is censored by the number of remaining voters: $a_t = \min\{A_t, \, N - s_{t-1}\}$, and the state lies in $\mathcal{S} := \{(s^+, s^-, t) : s^+ + s^- \leq N,\ 1 \leq t \leq T\}$. The probability mass function of the observed batch
$a_t \in \{0, \dots, N - s_{t-1}\}$ is therefore
\[
    P(a_t = a) =
    \begin{cases}
        \dfrac{\lambda(t)^a\, e^{-\lambda(t)}}{a!}
        & \text{if } a < N - s_{t-1}, \\[1em]
        1 - \displaystyle\sum_{j=0}^{N - s_{t-1} - 1}
            \dfrac{\lambda(t)^j\, e^{-\lambda(t)}}{j!}
        & \text{if } a = N - s_{t-1}.
    \end{cases}
\]

Because voters cannot recast or change their votes, the voting process corresponds to sampling \emph{without replacement} from a finite population of $N$ voters. Thus, conditional on the votes already observed $(s^+, s^-)$ and on $M^+$, the composition of each batch follows a hypergeometric distribution:
$a^+ \mid a, M^+ = m^+, s^+, s^- 
\sim \mathrm{Hypergeometric}\!\left(
N - s^+ - s^-,
\ m^+ - s^+,
\ a
\right).$

\subsection{Bayesian Updating}

The planner updates their beliefs about the unobserved state $M^+$ based on the cumulative votes observed. Because voters are exchangeable, the order of arrival is irrelevant. The cumulative counts $(s_t^+, s_t^-)$ constitute a sufficient statistic for inference.

\begin{lemma}
\label{lem:bayesian_update}
For a batch of size $a \leq N - s$, the probability of observing $a^+$ favorable votes is
\[
P(a^+\mid a,s^+,s^-)
= \frac{\binom{a^++s^+}{s^+}\binom{a-a^++s^-}{s^-}}{\binom{a+s+1}{a}},
\qquad a^+ \in \{0,\dots,a\}.
\]
\end{lemma}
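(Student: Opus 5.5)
We compute the predictive probability by averaging the hypergeometric likelihood against the posterior over $M^+$, and then collapse the resulting sum with a Vandermonde-type identity. Write $s=s^++s^-$, $R=N-s$, and $m=M^+-s^+$ for the number of supporters among the $R$ remaining voters.

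\emph{Step 1 (posterior).} Under the uniform prior, observing $(s^+,s^-)$ gives the posterior
\[
P(M^+=s^++m\mid s^+,s^-)\propto \binom{s^++m}{s^+}\binom{N-s^+-m}{s^-}, \qquad m=0,\dots,R.
\]
This is the standard Bayes computation. The hypergeometric likelihood of the observed counts is $\binom{M^+}{s^+}\binom{N-M^+}{s^-}/\binom{N}{s}$ (order irrelevant by exchangeability), and the factor $\binom{N}{s}$ does not depend on $M^+$. The normalizing constant follows from the identity
\[
\sum_{k}\binom{k}{p}\binom{n-k}{q}=\binom{n+1}{p+q+1}.
\]
This identity counts $(p+q+1)$-subsets of $\{0,\dots,n\}$ by the position $k$ of their $(p+1)$-th smallest element. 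It gives $\sum_m(\cdot)=\binom{N+1}{s+1}$. Equivalently, one can use the Beta--Binomial representation mentioned in the model: the uniform prior on $M^+$ corresponds to $p\sim U[0,1]$ with i.i.d.\ votes, so that $p\mid s^+,s^-\sim\mathrm{Beta}(s^++1,s^-+1)$.

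\emph{Step 2 (predictive via the Beta representation).} This is the cleanest route. The arrival of a batch is independent of vote content, and sampling without replacement from the exchangeable population is distributionally the same as i.i.d.\ Bernoulli$(p)$ draws given $p$. Hence
\[
P(a^+\mid a,s^+,s^-)=\binom{a}{a^+}\frac{B(s^++a^++1,\,s^-+a-a^++1)}{B(s^++1,\,s^-+1)}.
\]
Expanding the Beta functions into factorials gives
\[
\binom{a}{a^+}\,\frac{(s^++a^+)!\,(s^-+a-a^+)!\,(s+1)!}{s^+!\,s^-!\,(s+a+1)!}.
\]
Regrouping as $\binom{s^++a^+}{s^+}\binom{s^-+a-a^+}{s^-}\cdot \frac{a!\,(s+1)!}{(s+a+1)!}$ yields exactly $\binom{s^++a^+}{s^+}\binom{a-a^++s^-}{s^-}/\binom{a+s+1}{a}$.

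\emph{Step 3 (justify the reduction, the main obstacle).} The delicate point is rigorously equating the finite-population hypergeometric model under a uniform prior on $M^+$ with the Beta--Bernoulli model. I would argue this directly rather than merely citing it. Under i.i.d.\ Bernoulli$(p)$ with $p\sim U[0,1]$, the count $M^+$ has $P(M^+=k)=\binom{N}{k}B(k+1,N-k+1)=1/(N+1)$, so it is uniform. Conditional on $M^+$, every arrangement of the $N$ votes is equally likely. So the joint law of the vote sequence coincides in both models, and the arrivals are independent of that sequence. Predictives of any sub-batch therefore agree.

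As a check, I would also verify the formula directly. Summing the product of the Step 1 posterior and $\binom{m}{a^+}\binom{R-m}{a-a^+}/\binom{R}{a}$ over $m$ and applying the subset-counting identity once more recovers the same closed form. The constraint $a\le N-s$ guarantees that all binomials are well defined.
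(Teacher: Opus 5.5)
Your proof is correct and takes essentially the paper's route. Both pass to the Beta--Bernoulli representation, integrate the Binomial$(a,p)$ likelihood against the Beta$(s^++1,s^-+1)$ posterior, and cancel factorials. Your Step~3 and the direct summation check add something the paper omits: they prove explicitly the equivalence between the uniform-prior hypergeometric model and the Beta--Bernoulli model, which the paper only cites.
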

\begin{proof}
See Appendix~\ref{app:bayesian_update}.
\end{proof}

This predictive distribution serves as the transition kernel in the Bellman equation defined below, mapping current beliefs to the expected continuation value. Note that it is the Beta-Binomial distribution with parameters $(a, s^+ + 1, s^- + 1)$ and is independent of the population size $N$. Thus, conditional on the realized batch size $a$, the planner's forecast of its composition depends only on the current tally $(s^+, s^-)$. The population size $N$ enters the problem only through the censoring of batch sizes $(a \leq N - s)$ and the definition of the majority threshold $M^+ > N/2.$

\subsection{Decision Problem and Dynamic Programming}

At each time $t$, after observing the state $(s_t^+, s_t^-)$, the planner chooses whether to stop the voting process or continue sampling at cost $\kappa>0$, incurred regardless of whether any votes arrive at date $t$. It captures the opportunity cost of keeping the decision open, rather than a per-observation processing cost. If the planner stops at time $t$, they choose the action that maximizes the probability of matching the true majority preference. The expected payoff from stopping is $U(s_t^+, s_t^-) = R \cdot \max\!\left\{\mu(s_t^+, s_t^-),\ 1 - \mu(s_t^+, s_t^-)\right\},$ where $R>0$ is the reward for making the correct decision. The posterior belief that the proposal is supported by the majority is
\begin{align}
\label{eq:belief}
    \mu(s^+,s^-) =  P(M^+ > N/2 | s^+,s^- ) = 
    \sum_{m^+>\frac{N}{2}}\dfrac{\binom{m^+}{s^+}\binom{N-m^+}{s^-}}{\binom{N+1}{s + 1}},
\end{align}
obtained by applying Bayes' rule with the uniform prior and a hypergeometric likelihood.\footnote{The denominator follows from $\sum_{m^+=s^+}^{N-s^-}\binom{m^+}{s^+}\binom{N-m^+}{s^-} =
\binom{N+1}{s+1}$ \citep[Eq.~(5.26)]{Graham1994}.} The value function satisfies the terminal condition $V_T(s^+, s^-) = U(s^+, s^-)$ and, for $t < T$, the Bellman equation $V_t(s^+, s^-)= \max\!\left\{U(s^+, s^-),\quad\mathcal{C}_t(s^+, s^-)\right\}$, 
where the continuation value is $\mathcal{C}_t(s^+, s^-)= -\kappa + \mathbb{E}\!\left[V_{t+1}(\tilde{s}_{t+1}^+, \tilde{s}_{t+1}^-)\;\middle|\; s^+, s^-\right]$. The expectation integrates over the random batch size 
$a_{t+1}$ and its composition:
\begin{equation}
\label{eq:expectation}
    \mathbb{E}[V_{t+1}]
    = \sum_{a=0}^{N-s_t}
        p_{t+1}(a)\,
        \sum_{a^+=0}^{a}
            P(a^+ \mid a, s_t^+, s_t^-)\;
            V_{t+1}(s_t^+ + a^+,\; s_t^- + a - a^+),
\end{equation}
where $p_{t+1}(a)$ is the censored Poisson mass at rate $\lambda(t+1)$ and $P(a^+ \mid a, s_t^+, s_t^-)$ is the predictive distribution of Lemma~\ref{lem:bayesian_update}.

A \textit{dynamic majority-quorum rule} $\phi$ is a mapping from states $(s^+, s^-, t) \in \mathcal{S}$ to actions $\{\text{stop}, \text{continue}\}$. We denote by $\phi^*$ the optimal rule, \ie the policy that solves the planner's dynamic program.

\paragraph{Equivalent state representations}
The planner's state $(s_t^+, s_t^-)$ admits three equivalent 
representations used throughout the paper: the raw counts 
$(s_t^+, s_t^-)$, the belief-count pair $(\mu_t,s_t)$, and the vote lead-count pair $(d_t, s_t)$. By the Monotone Likelihood Ratio Property of the hypergeometric family \citep{Karlin1956}, $|\mu_t - \tfrac{1}{2}|$ is strictly increasing in $d_t$ for fixed $s_t$, so all three representations carry identical information.

\section{Optimal Majority-Quorum Rule}
\label{sec:result}
The planner balances the informational value of each vote, which improves the estimate of majority preference, against the depletion of the pool of future evidence. This section characterizes how this trade-off shapes the stopping rule, how it responds to the environment’s primitives, and its numerical properties.

\subsection{Structural Properties of the Solution}
\label{subsec:structure}
The optimal policy admits a threshold structure: the planner waits while the current vote lead is small relative to the remaining uncertainty and stops once it exceeds a critical level. That level depends on how many votes have been cast and on how much time remains, since both govern how much the planner can still learn. The following theorem makes this precise.

\begin{theorem}
\label{thm:threshold}
The optimal voting rule $\phi^*$ is a dynamic majority-quorum rule: for all $(s_t^+, s_t^-, t) \in \mathcal{S}$, there exists a unique threshold $d_t^*(s_t) \geq 0$ such that $\phi^*$ prescribes continuation if and only if $|s_t^+ - s_t^-| < d_t^*(s_t)$. When $s_t = N$, $d_t^*(N) = 0$.
\end{theorem}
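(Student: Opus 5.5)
We prove the statement by backward induction on $t$, working in the lead--count representation $(d,s)$ with $s$ held fixed. The statement reduces to showing that, for each fixed $t<T$ and $s$, the stopping set $\{d : U \ge \mathcal{C}_t\}$ is an upper set in $d$. We call this property (P). At $t=T$ the planner must stop, so $d_T^*(s)=0$ for every $s$.

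The first step uses symmetry. The model is invariant under swapping $+$ and $-$: the prior, the predictive kernel of Lemma~\ref{lem:bayesian_update}, and $\mu \mapsto 1-\mu$ all respect this swap. Hence $U$, $\mathcal{C}_t$ and $V_t$ depend on $(s^+,s^-)$ only through $(d,s)$. We then write $U(d,s)=R\,q(d,s)$, where $q=\max\{\mu,1-\mu\}$. By the MLR remark following the model, $q$ is strictly increasing in $d$ for fixed $s<N$.

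The second step handles the case $s=N$. Here $\mu\in\{0,1\}$, so $U=R$ is the maximal attainable payoff. Since $\kappa>0$, continuation yields at most $R-\kappa<U$, so stopping is strict for every $d$ and $d_t^*(N)=0$. More generally, once $d>N-s$ the remaining voters cannot overturn the lead, so $q=1$ and stopping is again strictly optimal. In particular the continuation region is bounded, and the threshold is finite.

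The third step, the core, establishes (P) through the gain from waiting, $G_t(d,s):=\mathcal{C}_t(d,s)-U(d,s)$. It suffices to show that $G_t$ is nonincreasing in $d$ on $\{0,2,\dots\}$ (same parity as $s$), or at least that it is single-crossing from above. Then $d_t^*(s)$ is the smallest $d$ with $G_t\le 0$. It is unique because each fixed $s$ allows only finitely many $d$, and we adopt the convention of ties $\to$ stop. The intended argument exploits the martingale property of beliefs: $\mathbb{E}[\mu_{t+1}\mid s_t]=\mu_t$. Write $V_{t+1}=U+(V_{t+1}-U)^+$. Then
\[
G_t=-\kappa+R\big(\mathbb{E}[q_{t+1}]-q_t\big)+\mathbb{E}\big[(V_{t+1}-U)^+\big].
\]
The term $\mathbb{E}[q_{t+1}]-q_t$ is the expected value of the option to reverse the current leading action. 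That option pays only on paths where the lead crosses zero, and the probability of such a crossing falls as $d$ rises. The term $\mathbb{E}[(V_{t+1}-U)^+]$ is handled inductively: $(V_{t+1}-U)^+$ is positive only when $|d_{t+1}|<d_{t+1}^*(s_{t+1})$. We would then couple the predictive transitions from $d$ and $d+2$ at equal $s$, using the first-order stochastic dominance of the Beta-Binomial predictive in $s^+$, to show that the higher starting lead puts less mass in the continuation region and less mass across zero.

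The main obstacle is this monotonicity of $G_t$, because $U$ is not linear in the posterior: its slope in $d$ varies with $s$. Equivalently, $|\mu-\tfrac12|$ is not a martingale, so dominance alone gives monotonicity of $\mathbb{E}[V_{t+1}]$ but not of $\mathcal{C}_t-U$. Our first attempt would be the standard convexity route. Express $V_{t+1}$ as a function of $\mu$ at each count level, show inductively that it is convex and symmetric about $\tfrac12$, and show that it has slope bounded by $R$ (it equals $R\max\{\mu,1-\mu\}$ outside a band). Then $V_{t+1}-U$, viewed as a function of $\mu$, is nonincreasing in $|\mu-\tfrac12|$ on each side. The complication is that future counts differ, so the belief parametrisation changes with $s$. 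We would handle this by conditioning on the batch size $a$, under which all future states share $s+a$, and applying the convexity argument within each $a$-slice before averaging over $p_{t+1}(a)$. If the single-slope bound fails near the boundary $s\approx N$, we would instead verify single-crossing directly, using the truncation from the second step.
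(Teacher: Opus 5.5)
Your skeleton matches the paper's. You run backward induction on the gain from waiting and split it into the reversal term $R(\mathbb{E}[q_{t+1}]-q_t)$ and the inductive term $\mathbb{E}[V_{t+1}-U]$. You use FOSD of the Beta-Binomial predictive and a coupling. Your treatment of $s=N$ is fine. The gap is that the core inductive step is never carried out, and the route you propose for it does not work.

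First, the ``main obstacle'' you name is not one. On $\mu_t\ge\tfrac12$ the stopping payoff is linear, $U=R\mu_t$. Using $\max\{x,1-x\}=x+\max\{0,1-2x\}$ and the belief martingale, $\mathbb{E}[U(\mu_{t+1})]-U(\mu_t)=R\,\mathbb{E}[\max\{0,1-2\mu_{t+1}\}]$. This is the expectation of a monotone function of $\mu_{t+1}$. FOSD of $\mu_{t+1}$ in $d_t$ conditional on $a$, then averaging over $a$ (whose law does not depend on $d_t$), settles this term outright. You wrote down exactly this ``option to reverse'' but did not see that it closes that part of the argument.

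The convexity route you substitute is unsupported. The value function is convex in the full posterior over $M^+$. But at fixed $s$ the posteriors indexed by $s^+$ do not form an affine family, and the transition law of $\mu$ varies with $s^+$ nonlinearly. So $\mu$ is not a statistic along which convexity propagates, and conditioning on $a$ does not change this. The paper flags precisely this as the reason the standard convexity argument is unavailable. Your fallback (``verify single-crossing directly'') is not an argument.

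Second, the inductive term needs more than you set up. Carrying only the single-crossing property (P) is too weak: knowing where $(V_{t+1}-U)^+$ is positive does not make its expectation monotone in $d_t$. You need the stronger hypothesis that $\mathcal{C}_{t+1}-U$ is non-increasing in $|d_{t+1}|$ at each fixed $s_{t+1}$. Then $V_{t+1}-U=\max\{0,\mathcal{C}_{t+1}-U\}$ is non-increasing in $|d_{t+1}|$.

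You then need the absolute lead $|d_{t+1}|$, not the signed lead, to be stochastically increasing in $d_t$ given $a$. FOSD of the signed predictive in $s^+$ does not give this, because an upward shift can shrink the absolute value of paths that end below zero. The paper gets it by revealing the batch one vote at a time. By Lemma~\ref{lem:bayesian_update} with $a=1$, the next vote is an aye with probability $\tfrac12+\tilde d/(2(s+2))$, so the absolute lead is a Markov chain moving by one per vote. Run the chains from $d$ and $d+2$ independently until they meet, and identically afterwards. Their gap is even, so they cannot cross without meeting, which gives pathwise domination.

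With the martingale identity for the reversal term, and the monotonicity hypothesis plus the absolute-lead coupling for the continuation term, the induction closes with no convexity at all.
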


\begin{proof}
See Appendix~\ref{app:proofT1}.
\end{proof}

Theorem~\ref{thm:threshold} shows that the optimal rule is a supermajority requirement whose size depends on both turnout and time, \ie the planner implements the leading alternative once its margin exceeds a threshold $d_t^*(s_t)$. Threshold policies are a familiar feature of sequential analysis \citep{DeGroot1970, Chow1971, Lai1979}. What distinguishes the present result is that the threshold depends on turnout as well as time. Whereas classical sequential tests condition on a one-dimensional running score, the required lead here varies with the number of votes cast. The reason is that a given vote margin becomes more informative as votes accumulate: as $s_t \rightarrow N$, the pool of remaining voters shrinks and the same lead becomes increasingly decisive.\footnote{The same finite-population structure also complicates the proof. The belief $\mu$ is not a sufficient statistic for the planner's problem: the continuation value depends on the pair $(\mu, s)$, and the transition law of $\mu$ varies with $s$, so the standard convexity argument for threshold optimality is unavailable. The proof therefore establishes the result directly through monotonicity of the net stopping gain $G_t(\mu,s)$ in $\mu$ for each $s$, exploiting the ordering properties induced by the Beta-Binomial predictive distribution (Lemma~\ref{app:proofT1}) and the independence of batch sizes from the current vote composition under Poisson arrivals.}

The boundary condition $d_t^*(N)=0$ is the limiting case of this logic. Once all votes have been cast, no further information can arrive, and the planner implements whichever alternative leads, regardless of the margin. More generally, the threshold balances the option value of further learning, which depends on the number of votes remaining unobserved, against the flow cost of delay. Proposition~\ref{prop:comparative_statics} characterizes how this balance responds to the primitives of the environment.


\begin{proposition}
\label{prop:comparative_statics}
Under $\phi^*$, the threshold $d_t^*(s;\rho,\boldsymbol{\lambda})$ with $\rho \equiv R/\kappa$ satisfies:
\begin{enumerate}[(i)]
    \item $d_t^*(s;\rho',\boldsymbol{\lambda}) \leq d_t^*(s;\rho,\boldsymbol{\lambda})$ for all $\rho' \leq \rho$, $t$, $s$;
    \item $d_t^*(s;\rho,\boldsymbol{\lambda}') \leq
    d_t^*(s;\rho,\boldsymbol{\lambda})$ whenever
    $\lambda'(t) \leq \lambda(t)$ for all $t \in \{1,\ldots,T\}$;
    \item $d_{t+1}^*(s;\rho,\boldsymbol{\lambda}) \leq
    d_t^*(s;\rho,\boldsymbol{\lambda})$ for all $t < T$, $s$,
    whenever $\lambda(\tau) \geq \lambda(\tau+1)$ for all
    $\tau \in \{t+1,\ldots,T-1\}$.
\end{enumerate}
\end{proposition}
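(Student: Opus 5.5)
All three parts will be proved through one reduction. By Theorem~\ref{thm:threshold}, for fixed $(t,s)$ the continuation set is the interval $\{d : d < d_t^*(s)\}$. So comparing thresholds is the same as comparing continuation sets. Define the net continuation gain $G_t(s^+,s^-) = \mathcal{C}_t(s^+,s^-) - U(s^+,s^-)$. It suffices to show that, state by state, the relevant change in primitives weakly decreases $G_t$. Then the continuation set shrinks, and by the threshold structure $d^*_t$ weakly falls.

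The tool is backward induction on the \emph{option value} $W_t = V_t - U = \max\{0, G_t\}$. Write $Q$ for the one-step transition operator of equation~\eqref{eq:expectation}. The martingale property of beliefs, $\mathbb{E}[\mu_{t+1}\mid \mathcal{F}_t] = \mu_t$, together with convexity of $\max\{\mu,1-\mu\}$, gives
\[
G_t = -\kappa + Q_{t+1}\bigl[U - U(\cdot)\bigr] + Q_{t+1} W_{t+1}.
\]
Here $Q_{t+1}U - U \ge 0$ is the one-step expected information gain.

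\textbf{Part (i).} Normalize by $\kappa$. Because $V_t/\kappa$ solves the same Bellman equation with reward $\rho$ and unit cost, $G_t/\kappa$ depends only on $\rho$. Induct on $t$ from $T$ downward, with the hypothesis that $W_t/\kappa$ is nondecreasing in $\rho$. At $t=T$ we have $W_T=0$. For the inductive step, $U/\kappa = \rho\max\{\mu,1-\mu\}$, so the information gain term $\rho\,(Q\max\{\mu,1-\mu\} - \max\{\mu,1-\mu\})$ is nondecreasing in $\rho$. The term $Q W_{t+1}/\kappa$ is nondecreasing by the inductive hypothesis, since $Q$ is a positive operator. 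Hence $G_t/\kappa$ is nondecreasing in $\rho$, and so is $W_t$.

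\textbf{Part (ii).} Again induct backward, with the hypothesis that $W_t$ is pointwise larger under $\boldsymbol\lambda$ than under $\boldsymbol\lambda'$. The crux is a Blackwell-type comparison: a larger batch is more informative. Under Poisson arrivals we may write $A_{t+1} = A'_{t+1} + B$ with $B$ an independent Poisson$(\lambda-\lambda')$ variable, censored at $N-s$. Observing $a+b$ votes is then the same as observing $a$ votes and then $b$ more votes under the Beta-Binomial kernel of Lemma~\ref{lem:bayesian_update}.

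This requires that $V_{t+1}$ satisfy, for any extra draw $b$, the inequality $\mathbb{E}_b V_{t+1}(\text{state after } b \text{ more votes}) \ge V_{t+1}(\text{state})$. Call this ``value of additional free votes is nonnegative.'' I would prove it as a separate lemma by induction. The mechanism is that a planner receiving votes earlier can mimic one receiving them later. With free information the continuation problem weakly improves: the pathwise policy comparison uses the fact that the extra votes enter the same exchangeable sufficient statistic. The lemma applies both to $U$, by convexity in beliefs, and to $W$. Combining the lemma with the inductive hypothesis gives $G_t^{\boldsymbol\lambda} \ge G_t^{\boldsymbol\lambda'}$.

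I expect this lemma to be the main obstacle. The reason is that $V_{t+1}$ is not a function of $\mu$ alone, so convexity in beliefs cannot be invoked directly. Instead I would argue via the supermartingale and coupling above: a future planner who receives $b$ extra votes now can simply ignore the surplus, because the remaining-pool censoring only binds more tightly.

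\textbf{Part (iii).} Compare $G_t(s^+,s^-)$ with $G_{t+1}(s^+,s^-)$ at the same state. Under $\boldsymbol\lambda$, the problem starting at $t+1$ has remaining horizon $T-t-1$ with rates $(\lambda(t+2),\dots,\lambda(T))$. The problem starting at $t$ has one extra period with rates $(\lambda(t+1),\dots,\lambda(T))$.

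First, embed the shorter problem in a problem of length $T-t$ by appending a final period with $\lambda = 0$. In that period stopping is optimal, since the planner would pay $\kappa$ for no information, so $G_{t+1}$ is unchanged. The shorter problem is then a horizon-$(T-t)$ problem with shifted rates $(\lambda(t+2),\dots,\lambda(T),0)$. The hypothesis $\lambda(\tau)\ge\lambda(\tau+1)$ for $\tau\ge t+1$ makes this sequence pointwise dominated by $(\lambda(t+1),\dots,\lambda(T))$, and the added zero rate handles the last coordinate. Part (ii), applied with time-homogeneous transition structure (the kernel depends on $t$ only through $\lambda$), then yields $G_{t+1}\le G_t$, and hence $d^*_{t+1}(s)\le d^*_t(s)$.
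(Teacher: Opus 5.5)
Parts (i) and (ii) follow the paper's architecture. In (i) you normalize by $\kappa$, which forces you to invoke $Q\max\{\mu,1-\mu\}\ge\max\{\mu,1-\mu\}$. The paper normalizes by $R$, so $\rho$ enters only through the flow cost $-1/\rho$ and no information-gain inequality is needed. In (ii), your ``free votes'' inequality is exactly the paper's Lemma~\ref{lem:submartingale}, and your superposition $A_{t+1}=A'_{t+1}+B$ replaces its FOSD step for censored Poisson batches.

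One claim, however, is false and must not be used: the inequality does \emph{not} hold for the option value $W$. Take a tied tally with $s=N-1$ and $\rho$ large. Then $W_{t+1}>0$, because the last vote decides the majority. Yet one extra vote empties the pool and sets $W_{t+1}=0$; extra votes now typically destroy option value. So you can neither prove the lemma for $V=U+W$ piece by piece, nor treat the kernel change in $G_t=-\kappa+(QU-U)+QW_{t+1}$ term by term. Keep $V$ whole,
\[
G^{\boldsymbol{\lambda}}_t-G^{\boldsymbol{\lambda}'}_t
=Q^{\boldsymbol{\lambda}}\bigl(W^{\boldsymbol{\lambda}}_{t+1}-W^{\boldsymbol{\lambda}'}_{t+1}\bigr)
+\bigl(Q^{\boldsymbol{\lambda}}-Q^{\boldsymbol{\lambda}'}\bigr)V^{\boldsymbol{\lambda}'}_{t+1},
\]
and prove the lemma for $V$ directly.

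The paper proves it by induction, using $\mathbb{E}\max\{U,\mathcal{C}_t\}\ge\max\{\mathbb{E}U,\mathbb{E}\mathcal{C}_t\}$ and exchangeability: an extra vote at $t$ followed by the $t+1$ batch is one batch of size one larger. Your mimicking idea is a valid non-inductive alternative once made precise. Couple both planners to a single random ordering of the remaining pool. With cumulative arrivals $X_\tau$, the planner holding the $b$ extra votes has seen the first $\min\{X_\tau+b,N-s\}\ge\min\{X_\tau,N-s\}$ votes. Her information therefore contains the other's, and she can follow his stopping time. The only wrinkle is that once her pool is exhausted she stops observing $A_\tau$. At that point she should stop, since she knows the majority and earns $R$.

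Part (iii) is where you genuinely depart from the paper. The paper proves $V_t\ge V_{t+1}$ by a direct induction, splitting $\mathcal{C}_t-\mathcal{C}_{t+1}$ into $\mathbb{E}_{\lambda(t+1)}[V_{t+1}-V_{t+2}]$ plus a kernel-change term handled by the same lemma. You instead derive (iii) from (ii). By time-homogeneity of the kernel, the problem from $t+1$ is the problem from $t$ with rates shifted forward one date and padded with a terminal zero-rate date. Stopping is optimal at that date, so values are unchanged. The hypothesis on $\{t+1,\dots,T-1\}$ is then exactly pointwise dominance of the shifted sequence by the original.

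This is correct and more illuminating. It makes (iii) a corollary of (ii), and it shows the deadline acting as a zero-rate date, which is why non-increasing rates are the natural condition. The price is that $\lambda=0$ lies outside the model's assumption $\lambda(t)>0$. You must therefore note that neither (ii) nor the lemma uses positivity, or take a limit $\eta\downarrow0$. The paper's induction stays inside the model and avoids this step.
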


\begin{proof}
    See Appendix~\ref{app:prop1-proof}.
\end{proof}

Proposition~\ref{prop:comparative_statics} establishes that the threshold is monotone in three dimensions: (i) the reward-to-cost ratio $\rho$, (ii) the arrival-rate profile $\boldsymbol{\lambda}$ under pointwise dominance, and (iii) calendar time $t$ under non-increasing arrivals. The three statements share a common logic: the threshold is lower whenever the value of continued observation is lower, whether because the reward is small relative to the cost of delay, because few votes are expected to arrive, or because the remaining periods are expected to deliver fewer votes than the current one.

The pointwise condition on arrival rates in part~(ii) cannot be relaxed to a weaker cumulative dominance given by $\sum_{j\leq t}\lambda'(j) \leq \sum_{j\leq t}\lambda(j)$ for all $t\in\{1,\ldots,T\}$.\footnote{\label{fn:prop1-2}Consider a minimal counterexample, say with $N=5$, $T=2$, $R=10$, $\kappa=\tfrac{1}{5}$, where the process $\boldsymbol{\lambda}=(3,2)$ dominates $\boldsymbol{\lambda}'=(1,3)$ cumulatively over the full sequence, yet if $(s_1^+,s_1^-)=(2,0)$, $d_1^*(2;\boldsymbol{\lambda}) < d_1^*(2;\boldsymbol{\lambda}')$. Details upon request.} While more total expected arrivals raise the option value of waiting through a volume effect, a front-loaded arrival process could shift information to earlier periods, so that fewer arrivals remain for the remaining periods, reducing the continuation value. Because the threshold depends on the date-by-date informativeness of future arrivals rather than their cumulative total, this can lower continuation values now even when total arrivals are higher, breaking monotonicity. Pointwise dominance removes this trade-off by ensuring informativeness is weakly higher at every date.

Front-loading of vote arrivals is also the key assumption in part~(iii), which describes the time path of the threshold within a given environment. While the terminal condition $d_t^*(N)=0$ ensures eventual collapse (Theorem~\ref{thm:threshold}), front-loading guarantees that this collapse is monotone. When participation is non-increasing, each period is the most informative remaining, so the planner never raises the threshold in anticipation of better future information. Under a back-loaded process, the threshold path may not be monotone.\footnote{\label{fn:prop1-3}For instance, consider $N=7$, $T=3$, $\boldsymbol{\lambda}=(1,1,5)$, $R=10$, and $\kappa=\tfrac{1}{2}$. At state $(s_t^+, s_t^-)=(3,1)$, the stopping payoff is $U(3,1) = R \cdot \mu(3,1) = 10 \cdot \tfrac{13}{14} = \tfrac{65}{7}$. Backward induction yields $\mathcal{C}_1(3,1) \approx 9.197 < U(3,1) \approx 9.286$, so the planner stops at $t=1$, while $\mathcal{C}_2(3,1) \approx 9.411 > U(3,1)$, so the planner continues at $t=2$. Hence $d_1^*(4) = 2 < d_2^*(4) = 4$, reversing the monotone-collapse conclusion of part~(iii).}

\subsection{Numerical Solution}
\label{sec:numerics}

While the preceding results establish the structure of the optimal policy, the boundary itself does not admit a tractable closed-form characterization. Sampling without replacement from a finite population makes the problem inherently state-dependent: the counts $(s^+,s^-)$ determine both the remaining uncertainty and the pool from which future batches are drawn, so the transition kernel varies with the state. Although the predictive distribution of each batch is Beta-Binomial (Lemma~\ref{lem:bayesian_update}), evaluating the continuation value involves an expectation of $\max\{U,\mathcal{C}_{t+1}\}$ over this distribution. Because the maximum partitions the expectation according to the next period’s stopping and continuation regions, the current boundary depends recursively on the entire next-period boundary.

\begin{figure}[ht]
    \centering
    \includegraphics[scale=0.36]{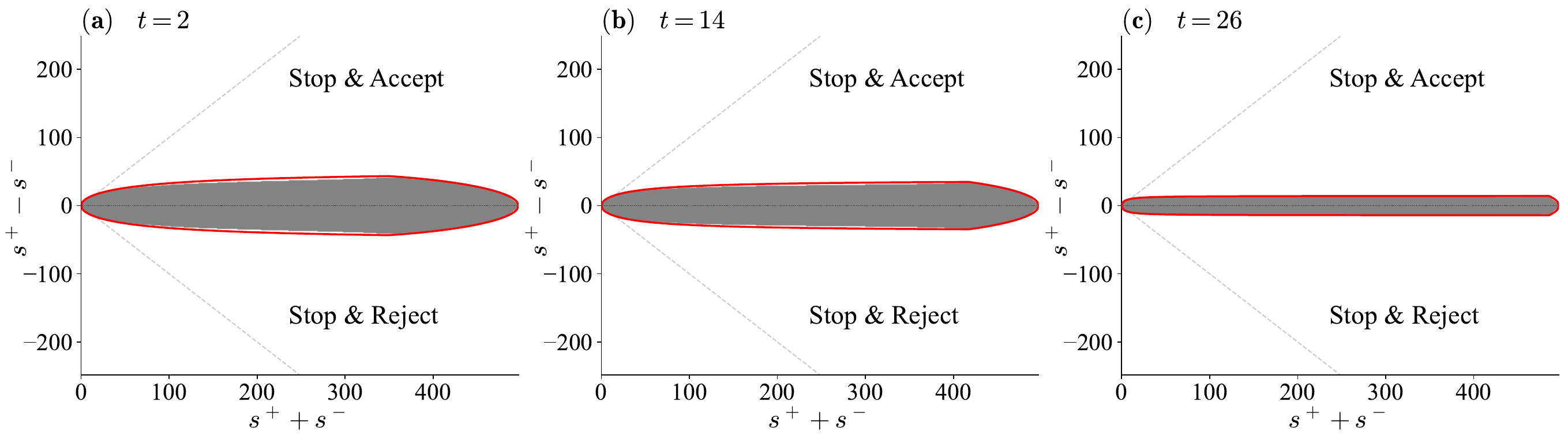}
    \caption{Exact majority-quorum rule $\phi^*$ for $N=497, T=28, \kappa=0.001, R=100, \Lambda (T)=159$, uniform arrival process at different points in time in the vote lead-count space. The gray region corresponds to the states where waiting is optimal. The red curve is the approximated stopping threshold $\hat\phi$ defined in Section~\ref{sec:approximation}.}
    \label{fig:approx}
\end{figure}

The boundary can nonetheless be computed exactly.\footnote{The code for the numerical solution is available at \url{https://github.com/kremenav/mqr}.} Because the cumulative count $s_t$ only increases, the state space is acyclic, and the value function follows by backward induction from the terminal condition $V_T = U$. Figure~\ref{fig:approx} shows the action prescribed by $\phi^*$ across all admissible states at early, middle, and late stages of the horizon. Three features are worth highlighting. First, the vote-lead threshold $d_t^*(s_t)$ is non-monotone in the count, producing a lens-shaped continuation region. The left side of the lens is bounded by feasibility, as the lead cannot exceed the count. The widening at intermediate counts arises from a standard precision effect, as the lead needed to resolve the remaining uncertainty grows with the count. The narrowing near the population boundary reflects the finite-pool correction, since each remaining vote carries more weight as the pool depletes. This sets the rule apart from sequential tests that sample with replacement. Second, under the uniform arrival process of Figure~\ref{fig:approx}, Proposition~\ref{prop:comparative_statics}(iii) implies that $d_t^*(s_t)$ is weakly decreasing in $t$ at every fixed $s_t$ and the thresholds accordingly collapse monotonically toward zero as $t\to T$. Third, despite the difficulty of characterizing the optimal boundary analytically, the numerical results reveal a striking regularity: the boundary closely follows the smooth curve shown in red in Figure~\ref{fig:approx}. This suggests that a simple closed-form rule can approximate the optimal policy, as we establish in the next section.

\section{Approximating the Optimal Rule}
\label{sec:approximation}

The optimal rule $\phi^*$ admits no closed-form characterization. A rule available only numerically is cumbersome to communicate, verify, and implement, which limits its use in practice. We therefore propose a simple majority-quorum rule that closely approximates $\phi^*$. Since its threshold depends on both turnout and calendar time, we derive a variant that separates them, allowing the rule to be posted as a schedule. Finally, we show how to express the rule as a bound on the probability of implementing the minority preference, a more interpretable alternative to the stakes-to-cost ratio.

\subsection{A Forecast-Based Approximation}
Let $\hat s_T := \min\bigl\{s_t + \bigl(\Lambda(T)-\Lambda(t)\bigr),\,N\bigr\}$ denote the censored forecast of terminal turnout. We define the following dynamic majority-quorum rule.

\begin{definition}
\label{def:adaptive}
Let $\hat\phi$ be the majority-quorum rule that stops at $t=T$, or at $s_t=N$, or at any other state where
\begin{equation}
\label{eq:approx_rule_adaptive}
\frac{\bigl(d_t/s_t\bigr)^{2}}{1/s_t-1/\hat s_T}
\;\geq\;
\bigl[2\log\rho-2\log\log\rho\bigr]\,.
\end{equation}
\end{definition}

The left-hand side of \eqref{eq:approx_rule_adaptive} can be interpreted as a signal-to-noise ratio for the vote lead that accounts only for reducible noise. The numerator is the squared lead share, while the denominator measures the uncertainty that further voting can still eliminate. Thus, the same lead justifies stopping when little information remains to be collected, whether because the deadline is near or the pool is nearly exhausted, and warrants waiting when much remains. The critical value on the right-hand side depends only on the stakes-to-cost ratio $\rho$, and is therefore constant across states and dates. Its leading term is the familiar Wald--Chernoff stakes term \citep{Wald1948}, with a second-order correction. 

The theorem below shows that $\hat\phi$ and $\phi^*$ can disagree only on a band around the stopping thresholds that is vanishingly small relative to its level.

\begin{theorem}
\label{thm:voting_rule}
Fix $T$. For every $\varepsilon\in(0,\tfrac12)$ there exist $N_\varepsilon$, $\rho_\varepsilon$ such that, for all $N\ge N_\varepsilon$ and $\rho\ge\rho_\varepsilon$ with $\log\rho\le\varepsilon N^{1/3}$, $\phi^*$ and $\hat\phi$ prescribe the same action at every state $(d_t,s_t,t)\in\mathcal S_\varepsilon$ at which
\[
\left|\; \frac{(d_t/s_t)^{2}}{1/s_t - 1/\hat s_T} \cdot \frac{1}{2\log\rho-2\log\log\rho} \;-\; 1 \;\right| \;>\; \frac{\log\log\rho}{\log\rho},
\]
where $\mathcal S_\varepsilon:=\Bigl\{(d_t,s_t,t):\ t<T,\ \ \varepsilon N\le s_t,\ \ s_t+\varepsilon N\le \hat s_T\le(1-\varepsilon)N\Bigr\}$.
\end{theorem}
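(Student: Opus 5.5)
The plan is to compare the exact net stopping gain $G_t(d,s)=U-\mathcal C_t$ with a Gaussian proxy in the large-$N$ regime. Then I show that the sign of $G_t$ is pinned down whenever the normalized statistic
\[
Z_t:=\frac{(d_t/s_t)^2}{1/s_t-1/\hat s_T}
\]
lies outside the band $[(1-\delta)c_\rho,(1+\delta)c_\rho]$, where $c_\rho=2\log\rho-2\log\log\rho$ and $\delta=\log\log\rho/\log\rho$. The threshold structure of Theorem~\ref{thm:threshold} means it suffices to control the sign of $G_t$. In particular, it is enough to show $G_t>0$ above the band and $G_t<0$ below it.

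\textbf{Step 1 (normal approximation of beliefs).} On $\mathcal S_\varepsilon$ both $s_t$ and $N-\hat s_T$ are of order $N$. A local CLT for the Beta-Binomial/hypergeometric posterior then gives, uniformly on $\mathcal S_\varepsilon$,
\[
1-\mu\approx\bar\Phi\bigl(d/\sqrt{s(1-s/N)}\bigr).
\]
The same approximation handles the predictive lead at the forecast terminal count, $d_{\hat s_T}\mid d_t$. It is approximately normal with mean $d_t\hat s_T/s_t$ and variance $\hat s_T^2(1/s_t-1/\hat s_T)(1-\cdot)$. I will track the error as a relative error in the tails. Because $\log\rho\le\varepsilon N^{1/3}$, the relevant deviations are $z\sim\sqrt{\log\rho}\le N^{1/6}$. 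This is exactly the moderate-deviation range in which Cramér-type expansions give $\bar\Phi$-ratios equal to $1+o(1)$; this is where the $N^{1/3}$ condition enters. Rescaling, the problem becomes, up to $o(1)$ errors, a Bayesian test of the sign of a Brownian drift with reducible variance $v=1/s_t-1/\hat s_T$, observed lead share $x=d_t/s_t$, and $Z=x^2/v$.

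\textbf{Step 2 (stopping region above the band).} For an upper bound on $\mathcal C_t$ I will use the full-information bound. Since $V_{t+1}\le R$,
\[
\mathcal C_t\le -\kappa+R\,P(\text{sign flips or is resolved})\le -\kappa+R.
\]
The sharper statement I need is the following. The option value satisfies
\[
\mathcal C_t-U\le R\,\mathbb E\bigl[(\text{error prob. at }t)-(\text{error prob. at }T)\bigr]-\kappa,
\]
and this is bounded by $R\cdot\bar\Phi(\sqrt{Z}\,\cdot\,\text{const})\le R\,e^{-Z/2}/\sqrt{Z}$ up to $1+o(1)$. If $Z\ge(1+\delta)c_\rho$, then
\[
e^{-Z/2}\le \rho^{-1-\delta}(\log\rho)^{1+\delta}.
\]
Multiplying by $R$ gives $o(\kappa)$, because $(\log\rho)^{1+\delta}\rho^{-\delta}/\sqrt{\log\rho}=(\log\rho)^{1/2+\delta}/\log\rho\to0$ after using $\rho^{-\delta}=1/\log\rho$. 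Hence stopping is strictly better, $G_t>0$, and $\hat\phi$ also stops.

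\textbf{Step 3 (continuation below the band).} This step needs a lower bound on $\mathcal C_t$. I take the suboptimal continuation policy that waits until $T$ (or until exhaustion) and then picks the leader. Under the Gaussian proxy its gain over stopping now is
\[
R\bigl[\bar\Phi(\sqrt Z\,\alpha)-\text{terminal error}\bigr]-(T-t)\kappa,
\]
which is of order $R\,e^{-Z/2}/\sqrt Z$ times a positive constant depending on $\varepsilon$. This uses the fact that waiting eliminates a fraction bounded away from zero of the variance, by the definition of $\mathcal S_\varepsilon$. With $Z\le(1-\delta)c_\rho$, this gain is of order $\rho\cdot\rho^{-(1-\delta)}(\log\rho)^{1-\delta}/\sqrt{\log\rho}\gg T$, since $T$ is fixed. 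Therefore $\mathcal C_t>U$, the optimal rule continues, and $\hat\phi$ continues as well.

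\textbf{Main obstacle.} The hard step is making Step 1 uniform with \emph{relative} tail accuracy. The errors must be $o(\delta)$ on the exponent scale, i.e.\ $o(\log\log\rho)$ in $\log$-probabilities. Matching the constants in Steps 2–3 exactly is precisely what produces the $-2\log\log\rho$ correction. I would first attempt explicit Beta-Binomial tail ratios via Stirling, verifying that the $N^{-1/2}z^3$ cubic term is $O(\varepsilon)$ under $\log\rho\le\varepsilon N^{1/3}$. Censoring of Poisson arrivals is negligible on $\mathcal S_\varepsilon$ by Chernoff bounds on $A_t$, and the randomness of the batch sum concentrates around $\Lambda(T)-\Lambda(t)$. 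That concentration justifies replacing the realized terminal count by $\hat s_T$.
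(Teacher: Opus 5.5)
Your architecture matches the paper's. You reduce the tally to the statistic $Z=\beta^2$ of \eqref{eq:beta-identity}. You then bracket $\phi^*$ between two benchmarks: stop when one period of delay already outweighs the maximal option value, and continue when waiting until $T$ pays for $(T-t)\kappa$. Steps 1--2 need three repairs.

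First, $\mathcal C_t-U\le R\,\mathbb E[\text{err}_t-\text{err}_T]-\kappa$ is the key lemma and must be proved, not asserted. It follows from optional sampling for the bounded belief martingale plus conditional Jensen for the convex map $x\mapsto\max\{0,1-2x\}$: no stopping rule beats waiting until $T$ in accuracy. The bound $V_{t+1}\le R$ you start from plays no role.

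Second, the constant in $\bar\Phi(\sqrt Z\cdot\text{const})$ must be exactly one. The Beta-Binomial predictive of Lemma~\ref{lem:bayesian_update} carries no finite-population factor, so the predictive lead variance is $\hat s_T^2(1/s_t-1/\hat s_T)$ with no extra $(1-\cdot)$. The deadline reversal probability is therefore $\Phi(-\sqrt Z)$. Any other constant rescales the threshold multiplicatively, which a band of relative width $\delta\to0$ cannot absorb.

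Third, relative-accuracy Gaussian tails are unavailable when $d$ is of order $s$, since then $Z\asymp N$ is a large deviation. Those states must be reached through the monotonicity of $G_t$ in $d$ from Theorem~\ref{thm:threshold}, checking only the smallest lead above the band at each $(s,t)$. The paper does the same via the monotonicity of $\mathcal G_N$.

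The genuine error is in Step 3: the gain from waiting is not of order $e^{-Z/2}/\sqrt Z$. That is the order of the reversal probability. The gain, however, is $R\,\mathbb E[\max\{0,1-2\mu_T\}]$. Conditional on a reversal, the deadline lead lies only $O(1/\sqrt Z)$ predictive standard deviations beyond the tie, so $1-2\mu_T=O(1/\sqrt Z)$ there. The correct asymptotic is the paper's \eqref{eq:G-asym}: $\mathcal G\simeq\pi^{-1}\sqrt{\Delta\tau/\tau}\,e^{-Z/2}/Z$.

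Your conclusion still survives, because the stated band has half-width about $2\log\log\rho$ in $Z$ and so leaves a factor of $\log\rho$ of slack on each side. At $Z=(1-\delta)c_\rho$ one has $\rho e^{-Z/2}=(\log\rho)^{2-\delta}$, so $R\mathcal G\asymp\kappa(\log\rho)^{1-\delta}\gg T\kappa$. In Step 2 the reversal probability is a legitimate upper bound and gives $R\mathcal G=O(\kappa/\sqrt{\log\rho})$.

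Your closing remark, though, is inconsistent with your own estimate. Solving $e^{-Z/2}/\sqrt Z\asymp1/\rho$ gives $Z\approx2\log\rho-\log\log\rho$. The $-2\log\log\rho$ correction comes precisely from the $Z^{-1}$ prefactor you dropped. With the sharp asymptotic, the paper shows both benchmark roots lie within $O(1)$ of $c_\rho$ in $Z$, differing from each other by $2\log(T-t)$. That is a much narrower disagreement region than the theorem states. Your cruder bounds suffice only for the stated band.
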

\begin{proof}
    See Appendix~\ref{app:T2}.
\end{proof}

In a large population, the planner’s posterior about the population margin is approximately normal, with precision increasing in turnout and, because sampling is without replacement, at an increasing rate as the pool is depleted. Since the posterior probability that a given alternative is the majority preference is a martingale, waiting is valuable only because the lead can reverse by the deadline. The probability of such a reversal depends on the tally only through the left-hand side of \eqref{eq:approx_rule_adaptive}. The proof then brackets the optimal boundary between two myopic benchmarks: stopping is optimal if the expected gain from a reversal by the deadline cannot cover even a single period of delay, and continuing is optimal if it covers the cost of delaying until the deadline. Because a reversal becomes exponentially less likely as evidence accumulates, while delay costs grow only linearly in the number of periods, both benchmarks require the same evidence as $\hat\phi$ up to an additive term that remains bounded as stakes grow. Relative to a threshold that grows with the stakes, this discrepancy vanishes, so $\phi^*$ and $\hat\phi$ can differ only within a band around the boundary of $\hat\phi$ whose relative width shrinks to zero. The disagreement region is concentrated where the net gain from stopping is near zero, \ie, where the planner is closest to indifferent between stopping and waiting.

\subsection{A Posted-Threshold Representation}

The threshold of $\hat\phi$ depends on turnout as well as on the time through the forecast $\hat s_T$. To obtain a threshold that depends on the date alone, note that the factor $N/(N-s_t)$ enters both sides of \eqref{eq:approx_rule_adaptive} and cancels in the ratio form (Appendix~\ref{app:T2}). Retaining it on the evidence side places the finite-population correction with the statistic, and evaluating the remaining threshold along the mean path $s_t = \Lambda(t)$ makes it a function of the date alone.

\begin{definition}
\label{def:posted}
Let $\tilde\phi$ be the majority-quorum rule that stops at $t=T$, or at $s_t=N$, or at any other state where
\begin{equation}
\label{eq:approx_rule_posted}
\frac{d_t^2}{s_t} \cdot \frac{N}{N - s_t}
\;\geq\;
\bigl[2\log\rho - 2\log\log\rho\bigr]\,
\frac{N\bigl(\Lambda(T) - \Lambda(t)\bigr)}{\bigl(N - \Lambda(t)\bigr)\Lambda(T)}.
\end{equation}
\end{definition}

The left-hand side of \eqref{eq:approx_rule_posted} is the log-likelihood ratio of the leading alternative under sampling without replacement. Its finite-population correction $N/(N-s_t)$ is negligible at low turnout and diverges as $s_t \to N$, consistent with $d_t^*(N) = 0$. The right-hand side depends only on primitives and the date: the critical value of \eqref{eq:approx_rule_adaptive} is scaled by the share of the information available at the deadline that has yet to arrive, which equals one at $t=0$ and vanishes at $T$, so the threshold declines from that value to zero as the deadline approaches. Since it depends on the arrival process only through $\Lambda(t)$ and $\Lambda(T)$, any two processes that agree at the current date and at the deadline induce the same threshold, whatever the path between them. Figure~\ref{fig:simple-rule}(a) compares the two rules visually: while $\tilde\phi$ produces a lens attaining its maximum width at $s_t = N/2$, $\hat\phi$ shifts this maximum to the right as dynamic updates to $\hat s_T$ delay pool saturation. Despite this shift, the two boundaries remain quantitatively close across realistic parameterizations.

The dependence on cumulative arrivals alone simplifies the comparative statics relative to Proposition~\ref{prop:comparative_statics}. The threshold of $\tilde\phi$ is monotone in calendar time under any arrival process, not only under non-increasing ones as in part (iii), and the threshold of $\hat\phi$ is increasing in expected remaining arrivals $\Lambda(T)-\Lambda(t)$, so that the pointwise dominance of part (ii) can be replaced by a comparison of cumulative arrivals over the remaining horizon. The path of the threshold over the horizon is in turn determined by the shape of $\Lambda(\cdot)$. Holding $\Lambda(T)$ fixed, a front-loaded arrival process exhausts most of the information early, so the threshold falls steeply at first and flattens near the deadline, while a back-loaded process produces a slow initial decline and a sharp late collapse. Figure~\ref{fig:simple-rule}(b) illustrates this for uniform, front-loaded, and back-loaded processes with the same total arrivals.

\begin{figure}
    \centering
    \includegraphics[width=1\linewidth]{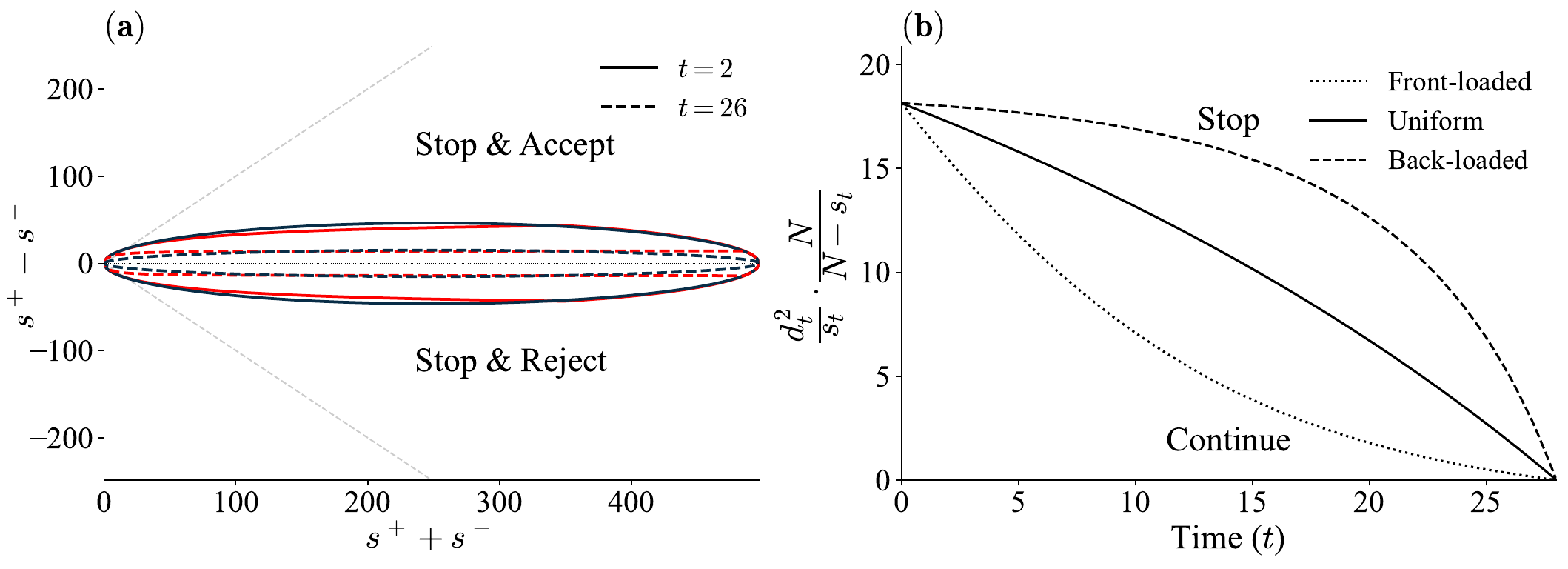}
    \caption{Approximate dynamic-quorum rules for $N=497, T=28, \kappa=0.001, R=100, \Lambda(T)=159$. (a) Comparison between $\hat{\phi}$ and~$\tilde{\phi}$ under a uniform arrival process; (b) $\tilde{\phi}$ over time for uniform, front- and back-loaded arrival processes.}
    \label{fig:simple-rule}
\end{figure}

The rule~$\tilde{\phi}$ is stated in terms of the stakes-to-cost
ratio $\rho$, a parameter that is difficult to reason about in
practice. The following corollary shows that it can be restated in terms of an error bound.
\begin{corollary}
\label{cor:calibration}
Let $\bar t \in [0,T]$ be a reference date. The rule $\tilde{\phi}$ stops at $t=T$, or at $s_t=N$, or at any other state where
\begin{equation}
\label{eq:cor-rule}
\frac{d_t^{2}}{s_t}\cdot\frac{N}{N - s_t}
\;\geq\;
\bigl[\Phi^{-1}(1-\alpha_{\bar t})\bigr]^{2}\;
\frac{\Lambda(T) - \Lambda(t)}{N - \Lambda(t)}\cdot\frac{N - \Lambda(\bar t)}{\Lambda(T) - \Lambda(\bar t)},
\end{equation}
where $\alpha_{\bar t} \in (0,\tfrac12)$ is the error tolerance at $\bar t$, \ie the posterior probability of implementing the minority preference is at most $\alpha_{\bar t}$ for any decision reached at or before $\bar t$ and is given by
\begin{equation}
\label{eq:cor-rk}
\alpha_{\bar t} \;=\; 1 - \Phi\!\left(\sqrt{\bigl[2\log\rho - 2\log\log\rho\bigr]\,
\frac{N\bigl(\Lambda(T) - \Lambda(\bar t)\bigr)}{\bigl(N - \Lambda(\bar t)\bigr)\Lambda(T)}}\right).
\end{equation}
\end{corollary}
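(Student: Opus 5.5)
The corollary is essentially an algebraic reparametrization of Definition~\ref{def:posted} together with a normal-approximation reading of the left-hand side. The plan has two parts. First, show that \eqref{eq:cor-rule} with $\alpha_{\bar t}$ defined by \eqref{eq:cor-rk} is literally the same inequality as \eqref{eq:approx_rule_posted}. Second, justify the interpretation of $\alpha_{\bar t}$ as a bound on the posterior probability of implementing the minority preference.

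For the first part, write $c_\rho := 2\log\rho - 2\log\log\rho$ and let $g(t) := N(\Lambda(T)-\Lambda(t))/((N-\Lambda(t))\Lambda(T))$ be the date factor on the right of \eqref{eq:approx_rule_posted}. By \eqref{eq:cor-rk}, $\Phi^{-1}(1-\alpha_{\bar t}) = \sqrt{c_\rho\, g(\bar t)}$, so the squared quantile equals $c_\rho\, g(\bar t)$. Substituting into the right-hand side of \eqref{eq:cor-rule} gives
\[
c_\rho\,\frac{N(\Lambda(T)-\Lambda(\bar t))}{(N-\Lambda(\bar t))\Lambda(T)}\cdot\frac{\Lambda(T)-\Lambda(t)}{N-\Lambda(t)}\cdot\frac{N-\Lambda(\bar t)}{\Lambda(T)-\Lambda(\bar t)} \;=\; c_\rho\, g(t).
\]
This is exactly the threshold of \eqref{eq:approx_rule_posted}. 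The stopping sets therefore coincide for every reference date $\bar t$, and $\alpha_{\bar t}\in(0,\tfrac12)$ holds because the square root is positive. One caveat: the division requires $\Lambda(\bar t)<\Lambda(T)$. At $\bar t=T$ I would either read the statement as a limit or restrict to $\bar t<T$, since $\tilde\phi$ stops at $T$ regardless.

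For the second part, I would use the large-population normal approximation underlying Theorem~\ref{thm:voting_rule} (Appendix~\ref{app:T2}). Given $(s^+,s^-)$, the posterior of the unobserved share among the remaining $N-s_t$ voters is approximately normal, with the finite-population correction. This yields
\[
1-\max\{\mu,1-\mu\} \;\approx\; 1-\Phi\!\left(\sqrt{\tfrac{d_t^2}{s_t}\tfrac{N}{N-s_t}}\right),
\]
that is, the evidence statistic $z_t^2 := \frac{d_t^2}{s_t}\frac{N}{N-s_t}$ is the squared $z$-score of the majority event. This matches the text's remark that the left-hand side is the log-likelihood ratio. A decision at date $t$ under $\tilde\phi$ occurs only if $z_t^2 \ge c_\rho g(t)$, and so the posterior error is at most $1-\Phi(\sqrt{c_\rho g(t)})$. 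Since $\Lambda$ is nondecreasing and $N\ge\Lambda(T)$, the map $g$ is nonincreasing in $t$: its derivative in $\Lambda(t)$ has the sign of $-(N-\Lambda(T))\le 0$. Hence for $t\le\bar t$ we have $g(t)\ge g(\bar t)$, and the error bound is at most $\alpha_{\bar t}$. Decisions forced at $s_t=N$ carry zero error, so they also satisfy the bound.

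The main obstacle is making the normal identification rigorous. In particular, I must check that the relevant posterior variance scales as $(N-s_t)/(N s_t)$ rather than the predictive-to-deadline variance $1/s_t-1/\hat s_T$ used in $\hat\phi$. I would verify this by computing the Beta-Binomial posterior variance of $M^+-N/2$ given Lemma~\ref{lem:bayesian_update}, and then appeal to a local CLT on $\mathcal S_\varepsilon$ as in Appendix~\ref{app:T2}. Accordingly, the error claim is stated as holding up to the approximation error of that appendix.
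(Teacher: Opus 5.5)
Your proposal is correct and follows the paper's proof essentially step for step. You use the same cancellation showing that \eqref{eq:cor-rule} with $\alpha_{\bar t}$ from \eqref{eq:cor-rk} reproduces the threshold of \eqref{eq:approx_rule_posted}, read the left-hand side as $Z^2$ via \eqref{eq:Z-X-app} and \eqref{eq:mu-Phi} (the Step~1 posterior variance $N(N-s)/(4s)$ settles your variance question as you anticipated), and use that the date factor is nonincreasing in $t$; your caveats about $\bar t=T$, the forced stops at $s_t=N$, and the merely approximate error bound are refinements the paper's proof leaves implicit.
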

\begin{proof}
    See Appendix~\ref{app:C1}.
\end{proof}
Thus, the planner stops as soon as the posterior probability that the leading alternative is not the majority preference falls below $\alpha_{\bar t}$.\footnote{The reference date is a normalisation rather than a second parameter. The equivalence holds for every $\bar t$, since moving it rescales $\alpha$ and the shape factor by reciprocal amounts.} The optimal mechanism is therefore a sequential test whose significance level is indexed by calendar time. That level increases monotonically over the horizon, from its tightest value at  $t=0$ to one half at the deadline, where the planner implements whichever alternative leads, such that the confidence required falls as the opportunity to acquire it through further votes disappears.

\section{Empirical Analysis}
\label{sec:empirics}

We evaluate the voting rule $\tilde{\phi}$ given in Definition \ref{def:posted} on governance data from Polkadot and and ask whether it would have implemented the same alternatives as the incumbent mechanism at an earlier date. Every parameter of the rule is calibrated on the chronologically first half of the sample and then held fixed on the second half. Performance is therefore reported out of sample, as the outcome a planner would have obtained by adopting $\tilde{\phi}$ at the midpoint of the observation period.

\subsection{Setting and Data}
\label{subsec:data}

Polkadot \citep{wood2022polkadot} is among the largest decentralized autonomous organizations by market capitalization. Its governance framework, \textit{OpenGov}, decides all protocol changes and treasury disbursements through binding on-chain referenda. Unlike in most governance systems, the proposal is encoded directly in the referendum and executed automatically if and only if the referendum passes. This feature is essential for our purposes, because observed votes then reveal preferences over a consequential outcome rather than opinions on a proposal whose enactment is determined elsewhere.

Under Polkadot's incumbent approval-support mechanism (ASM), a referendum passes once both the share of votes in favor and turnout exceed their respective thresholds, which decline monotonically over a fixed 28-day decision period.\footnote{See \url{https://wiki.polkadot.com/learn/learn-polkadot-opengov/} for the precise specification. Both quantities are conviction-weighted: each vote counts in proportion to the tokens cast, multiplied by a factor that increases with the period for which the voter agrees to lock them. We return to the consequences of this weighting when defining our measure of participation below.} At the deadline the thresholds reach a simple majority and zero turnout, and a referendum that has not passed by then is rejected. Threshold schedules vary across tracks, which bound the scope of the proposal. Since the ASM is itself a dynamic majority-quorum rule, it provides a natural benchmark for $\tilde{\phi}$. Moreover, the observed arrival paths reflect participation under a dynamic stopping rule and should therefore resemble those generated by $\tilde{\phi}$ itself.\footnote{Nevertheless, adopting $\tilde{\phi}$ may alter voters' arrival patterns. Our counterfactual analysis remains valid as long as arrival timing is uninformative about preferences. Thus, what matters is not the magnitude of the behavioral response but its symmetry, \ie, if supporters and opponents shift their timing alike, the tally remains representative at every date and the rule continues to apply with a recalibrated arrival intensity. Such symmetry is plausible because $\tilde{\phi}$ permits early termination in either direction and therefore creates no timing incentive specific to either side, whereas the ASM permits early approval only. Section~\ref{sec:conclusion} discusses how this prediction can be tested.} The blockchain records every vote, allowing us to observe each referendum's full arrival path at twice-daily resolution and hence the state variable $(s_t^+, s_t^-)$ of the model.

We collect the complete record of 1{,}824 OpenGov referenda submitted between June 2023 and October 2025. Unlike the ASM, which weights votes by tokens, we measure participation in \emph{accounts}. The object of inference is the majority preference of the electorate, and an account is the unit that holds a preference, \ie, a holder's tokens all carry the same one, so weighting by tokens would conflate the amount at stake with the number of voters observed. As tokens are concentrated in a few large holders, the two differ substantially.\footnote{\label{ft:conviction}Polkadot allows accounts to delegate their votes, and delegated votes are registered in bulk when the delegate votes. We therefore measure arrivals by directly voting accounts and compute $s_t$ and $d_t$ from the account tally. The conviction-weighted ASM outcome and the final account majority disagree in roughly 17\% of referenda.  We return to this wedge when defining the error benchmark below.}

From the raw record we retain referenda that reached a substantive disposition (approved or rejected) on the six focus tracks. We exclude two kinds of observations: referenda whose proposer asked voters to reject, which arise from faulty encoding, and referenda whose tally paths are too degenerate to estimate an arrival process, because they have too few snapshots, a single jump, or large non-monotonicities. The cleaned sample comprises 952 referenda, of which 591 were approved and 361 rejected. The stopping evaluation further excludes the \texttt{Root} track, whose single-slot queue mechanically distorts timing, leaving 933 referenda for the stopping evaluation (467 in the first half of the sample, 466 in the second).

\subsection{Empirical Strategy}
\label{subsec:strategy}

We split the cleaned sample at the median referendum index into a training half and a test half of 476 referenda each. All quantities entering Equation~\eqref{eq:cor-rule} are computed from the training half alone.

\paragraph{Environment primitives.}
The model's arrival process is a non-homogeneous Poisson process with intensity $\lambda(t)$. We adopt a parsimonious log-linear specification in normalized referendum time $x \in [0,1]$,
\begin{equation}
\label{eq:loglinear-cdf}
\lambda(x) = e^{a + b x},
\qquad
F(x; b) \;=\; \frac{e^{b x} - 1}{e^b - 1},
\end{equation}
where $F$ is the implied cumulative arrival fraction and the shape parameter $b$ summarizes the temporal pattern ($b<0$: front-loaded participation; $b \to 0$: uniform). We fit $F$ to each training referendum's cumulative arrival path by non-linear least squares. The one-parameter curve describes the arrival paths well: across the 472 training referenda with a non-degenerate fit, the median $R^2$ is $0.89$ (interquartile range $0.83$ to $0.95$), and the median root-mean-squared deviation between the fitted and observed cumulative arrival fraction is $0.07$. Participation is strongly front-loaded, with $b<0$ in $99.6$ percent of referenda and a pooled (median) estimate of $b = -3.05$ (interquartile range $-4.5$ to $-1.9$). The cumulative expected arrivals entering the rule are $\Lambda(t) = \Lambda(T)\, F(t/T;\, b)$, where the expected final turnout $\Lambda(T) = 159$ accounts is the training-half mean and the electorate size $N = 496$ is the maximum training-half turnout.\footnote{Pooling the arrival shape across referenda preserves the planner's information structure: a planner deploying the rule in real time cannot fit the intensity of the referendum at hand. Imposing the pooled shape sacrifices fit relative to referendum-specific estimates: with $b=-3.05$ the median $R^2$ across training referenda falls to $0.74$. The calibration of $N$ is conservative. Smaller electorates make the rule more patient, as the residual-informativeness factor in Equation~\eqref{eq:approx_rule_posted} then binds more tightly.}

\paragraph{Planner preferences.}
We evaluate the rule in the parameterization of Corollary~\ref{cor:calibration}, which replaces the reward-to-cost ratio $R/\kappa$ with two interpretable quantities: a per-decision error tolerance $\alpha$ and a reference date $\bar t$ up to which the tolerance is guaranteed -- any decision reached at or before $\bar t$ implements the minority preference with probability at most $\alpha$. This is calibration in the spirit of sequential testing, where boundaries are set by target error probabilities rather than by primitive payoffs \citep{Wald1948}. 

We treat $\alpha$ as a policy parameter and report results for a range of values. For each $\alpha$, we calibrate $\bar t$ on the training half as the earliest reference date whose implied threshold keeps the realized training-half error within a given $\alpha$, and denote it $\bar t^{\,*}(\alpha)$. This step absorbs any departure of early arrivals from a representative draw of the electorate, under which the realized error rate would exceed the stated $\alpha$ and the corollary's exact guarantee would not apply. We refer to $\bar t^{\,*}$ as the \emph{sufficient-guarantee horizon}, \ie, how far into the decision period the model's guarantee must extend before it can be trusted in this environment. Errors are measured as \emph{divergent decisions}: referenda in which the rule's early decision differs from the decision the same account-majority criterion would produce at the deadline. This benchmark isolates genuine stopping errors from the conviction-weighting wedge described in Footnote~\ref{ft:conviction}, which no account-based rule can close at any stopping time.\footnote{At fixed calibration, $(\alpha, \bar t)$ enters the rule only through the threshold constant $[\Phi^{-1}(1-\alpha)]^{2}\,\frac{N - \Lambda(\bar t)}{\Lambda(T) - \Lambda(\bar t)}$, which plays the role of the stakes term in the posted-threshold rule $\tilde\phi$ of Definition~\ref{def:posted}. Deriving $\bar t^{\,*}$ is therefore equivalent to selecting the smallest such constant that meets the tolerance in-sample, and the pair $(\alpha, \bar t^{\,*})$ labels the resulting threshold in interpretable units. In practice we search 45 constants, log-spaced between 2 and 1{,}500. The endpoints span the frontier from near-immediate stopping to a threshold the rule almost never crosses before the deadline, and the spacing resolves $\bar t^{\,*}$ to roughly half a day. The reported $\bar t^{\,*}$ values are thus the best grid points rather than continuous optima. The same grid traces the frontier in Figure~\ref{fig:oos-frontier}. Constants below the value that places $\bar t$ at the start of the decision period have no reference-date label at the tolerances considered, so the lax end of the frontier is indexed by the constant alone.}

For the evaluation, we replay each test referendum's realized vote path in chronological order and record the first date at which the rule of Corollary~\ref{cor:calibration}, with the frozen pair $(\alpha, \bar t^{\,*})$ and the training-half calibration, stops, together with the leading alternative at that date. If $\tilde{\phi}$ does not stop before the deadline, the deadline majority is implemented. Time saved is measured against the referendum's realized duration under the ASM.

\subsection{Out-of-Sample Results}
\label{subsec:results}

Table~\ref{tab:oos-performance} reports the results for tolerances $\alpha \in \{10\%, 5\%, 2\%, 1\%\}$. Two findings stand out. First, the derived reference dates are remarkably stable: across tolerances spanning an order of magnitude, $\bar t^{\,*}$ varies only between day $20.3$ and day $24.2$ of the 28-day period. Note that $\bar t^{\,*}$ anchors the \emph{height} of the stopping threshold, not the timing of decisions. A reference date of day 23 means the threshold is set high enough that every decision reached up to day 23 -- by when roughly $96\%$ of the expected electorate has voted -- meets the tolerance. The rule itself stops much earlier, whenever the observed evidence crosses that threshold. At the headline tolerance below, the median stopping time in the test half is day 10.8 of 28, with $60\%$ of its final turnout cast (interquartile range $50$--$77\%$). The stability of $\bar t^{\,*}$ across tolerances is a compact summary of how unrepresentative early participation currently is. Everything else in the calibration is either a policy choice ($\alpha$) or a routine estimate ($b$, $N$, $\Lambda(T)$). Second, the pairs $(\alpha, \bar t^{\,*})$ selected on the training half transfer to the test half with little degradation: at the headline tolerance of $\alpha = 2\%$ (with derived $\bar t^{\,*} = 22.9$ days), the rule diverges from the deadline decision in 1.72\% of test referenda while stopping on the median referendum 15.0 days earlier than the ASM, whose median realized duration in the test half is 28 days. A planner willing to accept two divergent decisions per hundred referenda would thus have cut decision times by roughly half, using only information available before the evaluated referenda were submitted.

\begin{table}[ht]
\centering
\caption{Out-of-sample performance of $\tilde{\phi}$ under the $(\alpha, \bar t)$ parameterization of Corollary~\ref{cor:calibration}. For each tolerance $\alpha$, the reference date $\bar t^{\,*}$ is the earliest date whose implied threshold keeps the training-half divergence within $\alpha$; the pair $(\alpha, \bar t^{\,*})$ is then applied unchanged to the test half.}
\label{tab:oos-performance}
\begin{tabular}{ccrrrr}
\hline
Tolerance & $\bar t^{\,*}$ (day) & \multicolumn{2}{c}{Divergent decisions (\%)} & \multicolumn{2}{c}{Median time saved (days)} \\
$\alpha$ (\%) & (train) & train & test & train & test \\ \hline
10 & 20.3 & 9.9 & 11.6 & 22.1 & 22.0 \\
5  & 21.6 & 4.5 & 5.2 & 19.8 & 20.0 \\
2  & 22.9 & 1.9 & 1.7 & 15.4 & 15.0 \\
1  & 24.2 & 0.9 & 1.3 & 11.6 & 11.5 \\ \hline
\end{tabular}
\end{table}

Figure~\ref{fig:oos-frontier} traces the full speed--accuracy frontier on both halves, obtained by varying the threshold constant of the rule; for constants at or above the value that places $\bar t$ at the start of the decision period, this is equivalent to varying the reference date $\bar t$ at a fixed tolerance. The two curves nearly coincide, indicating that the frontier is a stable property of the voting environment rather than an artifact of in-sample tuning. The frontier is steep at low error rates and flat at high ones, so almost all of the attainable time saving is available at near-perfect accuracy, while almost none remain at the highest accuracy levels. Exact zero divergence is unattainable altogether: as $\bar t \to T$ the threshold diverges and the rule never stops before the deadline, so suppressing the last divergent decision means forfeiting the mechanism entirely. The planner's choice is therefore not whether to trade accuracy for speed, but how cheaply the first few basis points of error buy the bulk of the delay savings.

\begin{figure}[ht!]
    \centering
    \includegraphics[width=0.75\linewidth]{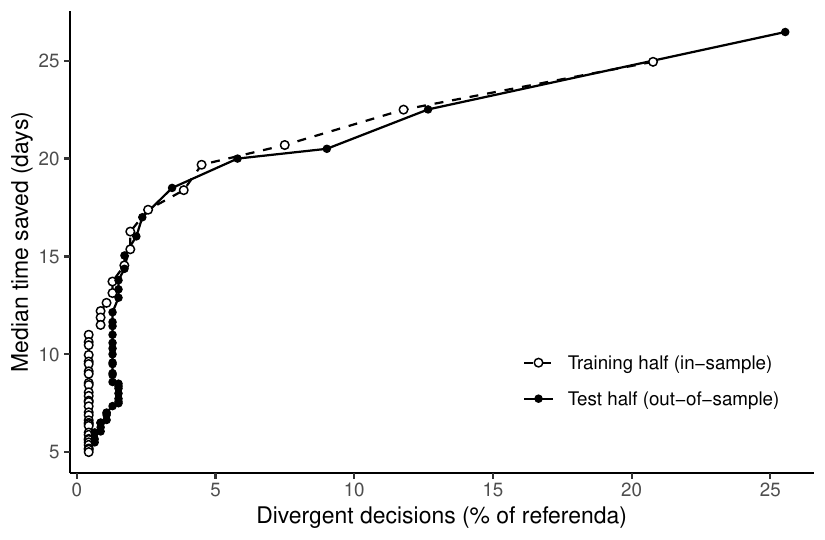}
    \caption{Speed--accuracy frontier of $\tilde{\phi}$, obtained by varying the threshold constant, on the training half (calibration sample) and the test half (out of sample). Each marker is one contant on the grid. Divergent decisions are measured against the deadline account majority.}
    \label{fig:oos-frontier}
\end{figure}

\section{Discussion and Conclusion}
\label{sec:conclusion}

In many collective decision problems, waiting for full participation is neither feasible nor desirable. This paper argues that the sequential arrival of votes can and should inform the decision of when to stop. We treat collective decision-making as an optimal stopping problem over a finite, exhaustible population of voters, bridging statistical social choice and sequential analysis, and characterize the optimal dynamic majority-quorum rule: a joint, time-varying threshold on vote lead and turnout that collapses as the deadline approaches and the option value of further learning vanishes. The threshold has intuitive comparative statics with respect to the stakes, the arrival process, and time, and admits a tractable closed-form approximation. Evaluated using governance data from Polkadot, the rule would have reached the same decisions substantially faster, suggesting meaningful gains in the speed-accuracy trade-off. The analysis treats the vote-generating process as exogenous: participation delays reflect frictions such as attention and availability rather than choices, which is what makes the planner's problem tractable and the mapping from primitives to the rule transparent. Relaxing this assumption opens a research agenda, which reframes the arrival process as something to be explained, measured, or designed.

A first direction endogenizes participation, which may respond to the stopping rule for three main reasons. The first is strategic: a minority too small to cross the boundary early may wait for it to collapse and then become decisive enough to close the vote in its favor, while early stopping also rewards early participation by allowing one side to foreclose further participation by the other. The second is cost-driven: participation costs make turnout state-dependent if voters wait until their vote is likely to be pivotal, or abstain once it is unlikely to matter. The third is informational: voters unsure of their own preference trade off the precision of their vote against the risk of losing the chance to cast it, so that early arrivals are noisier signals of the preferences the planner seeks to aggregate. In each case, participation and aggregation determine each other, making the arrival process an equilibrium object shaped by the planner's rule and voters' responses to it.

A second direction is empirical. The incumbent rule already terminates early and conditions on turnout and the running tally, so there is reason to expect that behavior under the proposed rule will resemble what we observe. But our data cannot reveal how voters would respond to a rule that treats acceptance and rejection symmetrically and that conditions on lead and turnout jointly, rather than through separate thresholds. The outcome of primary interest is whether supporters and opposers arrive symmetrically, since this, rather than the level or timing of participation, is what the rule's inference requires. A laboratory experiment can establish it directly: with induced preferences, the same electorate can be exposed to each rule, and treatments that vary communication within a side, participation costs, the visibility of the tally, and the cost of learning one's own preference would separate the three channels of endogenous participation already discussed. A staged field deployment could test whether the response survives real stakes, by assigning the proposed rule to a random subset of referenda on low-stakes decision tracks, with the existing deadline retained as a backstop, and comparing arrival composition between treated and untreated referenda. 

Finally, the order in which voting rights are exercised can be a design variable rather than a primitive. If voting is opened to randomly selected cohorts in successive windows rather than to the whole electorate at once, the order of arrival is chosen by the mechanism and timing becomes uninformative about preferences by construction. This is the logic of Assessment Voting \citep{Gersbach2021}, in which a random sample votes first and the remaining citizens vote or abstain after seeing the tally, so that early voters are highly pivotal and late voters save their participation costs. Our stopping rule supplies the decision layer such a procedure lacks, determining when the sample suffices and the remaining electorate need not be consulted. This leaves two design choices: the schedule of cohort sizes and window durations, which trades slower evidence accumulation against less informative timing, and the disclosure of the running tally, at the price of the state-dependent participation discussed above.

\newpage
\makeatletter
\renewcommand{\@biblabel}[1]{}
\makeatother
\bibliographystyle{apalike}
\bibliography{references}

\newpage
\appendix
\section*{Appendix: Proofs}
\setcounter{section}{1}
\renewcommand{\thesubsection}{A.\arabic{subsection}}
\subsection{Proof of Lemma~\ref{lem:bayesian_update}}
\label{app:bayesian_update} 
By the Bayes--Laplace result, a uniform prior on $M^+$ is the marginal of Binomial$(N,p)$ with $p\sim\text{Beta}(1,1)$, so the posterior over $p$ given $(s^+,s^-)$ is Beta$(s^++1,s^-+1)$, and the hypergeometric predictive coincides with the Beta-Binomial predictive (see, e.g., \cite{johnson1992}). Integrating the Binomial$(a,p)$ likelihood against this posterior gives
\begin{align*}
P(a^+\mid a,s^+,s^-)
&= \int_0^1 \text{P}(a^+ \mid a, p) f(p \mid s^+, s^-) dp\\
& = \int_0^1 \left[ \binom{a}{a^+} p^{a^+}(1-p)^{a-a^+} \right] \left[ \frac{p^{s^+}(1-p)^{s^-}}{\int_0^1 u^{s^+}(1-u)^{s^-} du} \right] dp \\
&=\binom{a}{a^+}\frac{\int_0^1 p^{a^++s^+}(1-p)^{a-a^++s^-}\,dp}{\int_0^1 p^{s^+}(1-p)^{s^-}\,dp}\\
&= \binom{a}{a^+}\frac{B(a^++s^++1,\,a-a^++s^-+1)}{B(s^++1,\,s^-+1)} \\
&= \frac{\binom{a^++s^+}{s^+}\binom{a-a^++s^-}{s^-}}{\binom{a+s+1}{a}},
\end{align*}
where the last step writes $B(\alpha,\beta)=\Gamma(\alpha)\Gamma(\beta)/\Gamma(\alpha+\beta)$ and cancels factorials.

\subsection{Proof of Theorem~\ref{thm:threshold}}
\label{app:proofT1}
The proof proceeds by backward induction on the net gain from stopping $G_t(\mu, s) := U(\mu, s) - \mathcal{C}_t(\mu, s)$, establishing that $G_t$ is non-decreasing in $\mu$ at every period $t$. The base case follows from the belief martingale property, which we establish in Lemma~\ref{lem:martingale} below. The inductive step decomposes $G_t$ into two terms each inheriting monotonicity from the previous period. The threshold structure follows from the boundary conditions and the MLRP of the hypergeometric family. We first show the martingale property of posterior beliefs.

 
\begin{lemma}
\label{lem:martingale}
For every state $(s_t^+, s_t^-, t) \in \mathcal{S}$, $\mathbb{E}\bigl[\mu(s^+ + a^+, \, s^- + a^-) \,\big|\, s^+, s^-\bigr] = \mu(s^+, s^-).$
\end{lemma}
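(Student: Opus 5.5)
The plan is to recognize $\mu(s^+,s^-)$ as a conditional probability of a fixed event given the observed tally, and then apply the tower property. Let $\mathcal{F}_t$ denote the information generated by the tally $(s_t^+,s_t^-)$. By \eqref{eq:belief}, $\mu(s^+,s^-) = P(M^+>N/2 \mid s^+,s^-)$. I will first argue that this is also the conditional probability given the full observed history, and not only given the current counts. Two facts deliver this. Voters are exchangeable, and batch sizes are drawn independently of the vote composition: $A_t$ is Poisson and independent of $M^+$, and the censoring $\min\{A_t, N-s_{t-1}\}$ depends only on the total count. Hence the likelihood of any observed path, viewed as a function of $M^+$, factors as a term free of $M^+$ times the hypergeometric likelihood of the cumulative counts. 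Consequently $(s_t^+,s_t^-)$ is sufficient, and $\mu(s_t^+,s_t^-) = P(M^+>N/2\mid \mathcal{F}_t)$.

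Next, I write $\mathbb{E}[\mu(s^++a^+,s^-+a^-)\mid s^+,s^-]$ as a conditional expectation of $P(M^+>N/2\mid \mathcal{F}_{t+1})$ given $\mathcal{F}_t$. The outer expectation here is taken over the joint law of the next batch. The batch size $a$ follows the censored Poisson mass $p_{t+1}(a)$, and its composition follows the predictive of Lemma~\ref{lem:bayesian_update}; this is exactly the kernel in \eqref{eq:expectation}. The key check is that this kernel is the genuine conditional law of the next batch given $\mathcal{F}_t$ under the planner's prior. This holds because the Beta-Binomial predictive in Lemma~\ref{lem:bayesian_update} was obtained by integrating the hypergeometric composition against the posterior on $M^+$ (equivalently, the posterior on $p$). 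Since $\mathcal{F}_t\subseteq\mathcal{F}_{t+1}$, the tower property then gives $\mathbb{E}[P(M^+>N/2\mid\mathcal{F}_{t+1})\mid\mathcal{F}_t]=P(M^+>N/2\mid\mathcal{F}_t)=\mu(s^+,s^-)$.

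As a concrete alternative, I can also verify the identity directly by summation. Write $\mu(s^++a^+,s^-+a-a^+)=\sum_{m>N/2}P(m\mid s^++a^+,s^-+a-a^+)$ and use Bayes' rule,
\[
P(m\mid \text{new})\,P(a^+\mid a,s^+,s^-) = P(m\mid s^+,s^-)\,P(a^+\mid a,m,s^+,s^-),
\]
where the last factor is the hypergeometric composition probability. Summing over $a^+$ removes the hypergeometric factor, since it is a probability mass function in $a^+$. Summing over $a$ with weights $p_{t+1}(a)$ is trivial, because those weights do not depend on $m$, and this returns $\mu(s^+,s^-)$. The cases $s=N$ (where $a=0$ surely) and $t=T$ are degenerate and hold trivially.

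The main obstacle is the sufficiency and independence step. The argument requires that the censoring of batch sizes carries no information about $M^+$, and that the predictive kernel of Lemma~\ref{lem:bayesian_update} is consistent with the hypergeometric model, that is, that the Bayes identity above holds exactly. I would settle this through the Beta-Binomial/hypergeometric equivalence already invoked in the proof of Lemma~\ref{lem:bayesian_update}. The remaining steps are routine.
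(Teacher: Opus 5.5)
Your proposal is correct and is essentially the paper's argument: the paper just invokes the law of iterated expectations, noting that the tally refines over time and that arrivals carry no information about $M^+$. Your sufficiency step and direct Bayes summation spell out the details it leaves implicit; one small fix is that $\mathcal{F}_t$ should be the full observed history rather than the $\sigma$-algebra of the current tally alone, since only the history is nested ($\mathcal{F}_t\subseteq\mathcal{F}_{t+1}$), and your sufficiency step is exactly what makes that substitution harmless.
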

The result follows immediately from the law of iterated expectations, since the tally at $t+1$ refines the tally at $t$ and arrivals are uninformative about $M^+$.

By symmetry in $s_t^+$ and $s_t^-$, we take $\mu_t \in [\tfrac12, 1]$ without loss of generality. For fixed $s_t$, $\mu_t' > \mu_t'' \geq \tfrac12$ is equivalent to $d_t' > d_t'' \geq 0$. We now show that a higher $\mu_t$ shifts both $\mu_{t+1}$ and $\max\{\mu_{t+1}, 1-\mu_{t+1}\}$ stochastically upward. 

\begin{lemma}
\label{lem:fosd}
For any $\mu_t' > \mu_t'' \geq \tfrac{1}{2}$ and all $x \in [0,1]$,
\begin{enumerate}[(i)]
\item $P(\mu_{t+1} \leq x \mid \mu_t', s_t) \leq P(\mu_{t+1} \leq x \mid \mu_t'', s_t)$;
\item $P(\max\{\mu_{t+1}, 1-\mu_{t+1}\} \leq x \mid \mu_t', s_t) \leq P(\max\{\mu_{t+1}, 1-\mu_{t+1}\} \leq x \mid \mu_t'', s_t)$.
\end{enumerate}
\end{lemma}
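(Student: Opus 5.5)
The plan is to reduce both statements to a first-order stochastic dominance comparison for the next-period tally, and then push it through the monotone map $(s^+,s^-)\mapsto\mu$. Fix $s_t=s$ and the two states $(s'^+,s'^-)$ and $(s''^+,s''^-)$ with leads $d'>d''\ge 0$, so that $s'^+>s''^+$ and $s'^+\ge s/2$. The batch size $a$ has the censored Poisson law, which depends on $s$ only. Conditioning on $a$ and then averaging over the common law $p_{t+1}(a)$ therefore preserves any dominance that holds conditionally. So it suffices to work at fixed $a$.

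Given $a$, the next state is $(s^++a^+,\,s^-+a-a^+)$, with $a^+\sim$ Beta-Binomial$(a,s^++1,s^-+1)$ by Lemma~\ref{lem:bayesian_update}. I would first show that the Beta-Binomial family is ordered in the likelihood-ratio sense when $s^+$ increases and $s^-$ decreases by the same amount. The ratio of the pmfs in Lemma~\ref{lem:bayesian_update} is
\[
\binom{a^++s'^+}{s'^+}\binom{a-a^++s'^-}{s'^-}\Big/\Bigl[\binom{a^++s''^+}{s''^+}\binom{a-a^++s''^-}{s''^-}\Bigr],
\]
and it is increasing in $a^+$. A cleaner route is a coupling: mix Binomial$(a,p)$ over Beta posteriors that are themselves likelihood-ratio ordered in $p$. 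Likelihood-ratio ordering gives FOSD of $a^+$. Moreover, $s^++a^+$ is shifted by $s'^+-s''^+>0$ on top of that, so the next-period count $\tilde s^+_{t+1}$ is FOSD-larger under the primed state, at the same total $s+a$. By the MLRP remark, for fixed total count $\mu$ is strictly increasing in $s^+$. Hence $\mu_{t+1}=\mu(\tilde s^+,s+a-\tilde s^+)$ is an increasing function of an FOSD-larger variable, which gives (i) conditional on $a$ and then unconditionally.

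For (ii), $g(\mu)=\max\{\mu,1-\mu\}$ is not monotone, so the FOSD result does not transfer directly. In terms of counts, $g(\mu_{t+1})$ is increasing in the next-period lead $|\tilde s^+-\tilde s^-|$, which is symmetric about $\tilde s^+=(s+a)/2$. So I need FOSD of $|2\tilde s^+-(s+a)|$. Given $a$, the plan is to show $P(|2\tilde s^+-(s+a)|\le k)$ is smaller under the primed state for every $k$. This interval probability is $F(u)-F(\ell^-)$ for a symmetric window $[\ell,u]$ around the midpoint. I would use two ingredients. The first is the likelihood-ratio ordering already obtained. The second is that under a state with $s^+\ge s^-$ the distribution of $\tilde s^+$ is "tilted toward the upper side": the pmf $q(j)$ satisfies $q((s+a)/2+k)\ge q((s+a)/2-k)$. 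This follows from the Beta-Binomial's reflection symmetry $a^+\leftrightarrow a-a^+$ combined with the likelihood-ratio monotonicity between $(s^+,s^-)$ and its mirror $(s^-,s^+)$. Together, these show the mass within the window decreases as $d$ increases. Concretely, I would compare the primed state to the double-primed state step by step along $d\to d+2$, keeping $s$ fixed. At each step the window probability is shown not to increase, using log-concavity/unimodality of the Beta-Binomial pmf, which holds because both Beta parameters are at least $1$.

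The main obstacle is (ii) in the region where $d''$ is small and the window straddles the mode. A pure FOSD argument is silent there, so the folded-dominance step has to be done carefully. My first attempt would be the coupling viewpoint. Write $\tilde s^+=s^++a^+$, where $a^+\mid p\sim$ Bin$(a,p)$ and $p\sim$ Beta. Conditional on $p$, the folded lead $|2(s^++a^+)-(s+a)|$ is FOSD-increasing in $s^+$ whenever $p\ge1/2$. On $p<1/2$, pair each $p$ with $1-p$ using the density ratio of the Beta posteriors. Since $s'^+\ge s'^-$, the posterior puts more weight on $p$ than on $1-p$, so $p\ge1/2$ dominates. Averaging then yields (ii).
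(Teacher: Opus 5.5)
Your part (i) matches the paper's proof: likelihood-ratio order of the Beta-Binomial predictive at fixed $(s_t,a)$, the extra shift $s'^+-s''^+$, the increasing map to $\mu_{t+1}$, and averaging over the $\mu_t$-independent batch-size law. Part (ii), however, is not established by either of your routes.

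In Route A, the generic properties you invoke (likelihood-ratio order, the upper tilt, log-concavity) do not imply dominance of the folded lead. Suppose the signed next-period lead is uniform on $\{-10,-8,\dots,10\}$ under one state and uniform on $\{0,2,\dots,10\}$ under the other. Both pmfs are log-concave and upper-tilted, and their ratio is non-decreasing. Yet the probability of a zero lead is $1/6$ under the ``larger'' state and $1/11$ under the ``smaller'' one. So the claim that window mass falls as $d$ rises needs a family-specific fact that you never supply.

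In Route B, the conditional-on-$p$ comparison holds $p$ fixed across the two states. But the mixing law $\mathrm{Beta}(s^++1,s^-+1)$ itself changes with $d$, so you are comparing two different mixtures. On $p<1/2$ the conditional comparison actually runs the wrong way: for small $p$ and large $a$, raising the starting lead by two lowers $|d+2a^+-a|$. After pairing $p$ with $1-p$, you must still compare integrals against two different symmetrized posteriors, whose relative weights $(p/(1-p))^{d}$ also move with $d$. Saying ``so $p\ge 1/2$ dominates'' asserts the needed inequality rather than proving it. Your conditional claim for $p\ge 1/2$ is true, but it is not where the difficulty lies.

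The missing idea is to use the sequential (P\'olya-urn) representation instead of the de Finetti mixture. By Lemma~\ref{lem:bayesian_update} with $a=1$, reveal the batch one vote at a time. Given running counts $(\sigma^+,\sigma^-)$ with $\sigma^++\sigma^-=k$, the next vote is an aye with probability $(\sigma^++1)/(k+2)=\tfrac12+(\sigma^+-\sigma^-)/(2(k+2))$.

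By the $\pm$ symmetry, the absolute lead is then itself a Markov chain whose one-step law depends only on its current value and $k$. It rises with probability $\tfrac12+|\tilde d|/(2(k+2))$ when $|\tilde d|>0$, and surely from $0$. This removes the state-dependent mixing measure altogether.

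Since both states share $s_t$, the two chains run on the same clock and start at leads of equal parity. Each moves by $\pm1$, so they cannot cross without meeting. Run them independently until they meet and identically thereafter. This gives $d'_{t+1}\ge d''_{t+1}$ almost surely for every $a$, hence (ii) after averaging over $a$.

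This is the paper's argument. Equivalently, the absolute-lead kernel is stochastically monotone, which is exactly the family-specific structure your Route A was missing.
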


\begin{proof} 
\noindent\textit{Part (i)}. By the bijection between $\mu_t$ and $s_t^+$ at fixed $s_t$, $\mu_t' > \mu_t''$ implies $s_t^{+\prime} > s_t^{+\prime\prime}$. Using Lemma~\ref{lem:bayesian_update}, we obtain
\[
\frac{P(a^+=k+1 \mid a, s_t^+, s_t^-)}{P(a^+=k \mid a, s_t^+, s_t^-)}
= \frac{\binom{k+1+s_t^+}{s_t^+}}{\binom{k+s_t^+}{s_t^+}}
  \cdot
  \frac{\binom{a-k-1+s_t^-}{s_t^-}}{\binom{a-k+s_t^-}{s_t^-}}
= \frac{k+1+s_t^+}{k+1} \cdot \frac{a-k}{a-k+s_t^-}.
\]
which is strictly increasing in $s_t^+$ and strictly decreasing in $s_t^-$, for fixed $s_t$. Hence, the predictive law of $a^+$ has the MLRP in $s_t^+$ at fixed $(s_t, a)$, which implies FOSD: $P(a^+ \geq j \mid a, \mu_t', s_t) \geq P(a^+ \geq j \mid a, \mu_t'', s_t)$, $\forall j$. Since $\mu_{t+1}$ is strictly increasing in $s_t^+ + a^+$ for fixed $(s_t, a)$, and FOSD is preserved by increasing transformations, $P(\mu_{t+1} \leq x \mid \mu_t', s_t, a) \leq P(\mu_{t+1} \leq x \mid \mu_t'', s_t, a)$ for every $a$. Since $P(a \mid s_t)$ depends on $(s_t, t)$ but not on $\mu_t$, and $P(\mu_{t+1} \leq x \mid \mu_t, s_t) = \sum_{a=0}^{N-s_t} P(a \mid s_t)\, P(\mu_{t+1} \leq x \mid \mu_t, s_t, a)$, we obtain (i).

\noindent\textit{Part (ii)}. Write $q_{t+1} := \max\{\mu_{t+1}, 1-\mu_{t+1}\}$. Fix $a$ and leads $d' > d'' \geq 0$ of the parity of $s_t$, corresponding to $\mu_t' > \mu_t'' \geq \tfrac12$. Conditional on $a$, $s_{t+1} = s_t + a$ is fixed and $q_{t+1}$ is strictly increasing in $d_{t+1}$, so it suffices to show that $P(d_{t+1} \geq j \mid d', s_t, a) \geq P(d_{t+1} \geq j \mid d'', s_t, a)$ for all $j$.

By Lemma~\ref{lem:bayesian_update} with $a=1$, the batch may be revealed one vote at a time, with the $(k+1)$-th vote an aye with probability $\tfrac12 + \tilde d_k/(2(k+2))$, where $\tilde d_k := s_k^+ - s_k^-$. The absolute lead therefore moves by exactly one at each vote, with a law depending only on its current value and $k$. Couple the two absolute-lead processes started at $d'$ and $d''$: run them independently until they first coincide and identically thereafter. Each has the correct marginal law. Since $d' - d''$ is positive and even and each step changes each lead by one, the ordering cannot reverse without the leads first coinciding. Hence, $d_{t+1}$ under $d'$ is almost surely at least $d_{t+1}$ under $d''$, which gives the tail inequality. Averaging over $a$ as in part (i) yields (ii).
\end{proof}


We prove Theorem~\ref{thm:threshold} by backward induction on the statement:
For each $s_t$, the net gain from stopping $G_t(\mu_t, s_t) $ is non-decreasing in $\mu_t$ on $[\tfrac{1}{2},1]$, implying the threshold structure. 

\textit{Terminal state.}
At $t=T$, $V_T = U$. One step before the deadline,
    $G_{T-1}(\mu_{T-1}, s_{T-1})
    = \kappa + U(\mu_{T-1}, s_{T-1})  - 
    \mathbb{E}\bigl[U(\mu_T, s_T)\mid \mu_{T-1}, s_{T-1}\bigr].$
Using $U(\mu) = R\max\{\mu, 1-\mu\}$\footnote{Note that $U(\mu_t, s_t) = R\max\{\mu_t, 1-\mu_t\}$ depends on $s_t$ only through $\mu_t$.}, $\mu_{T-1} \geq \tfrac12$ and the belief martingale 
$\mathbb{E}[\mu_T \mid \mu_{T-1}, s_{T-1}] = \mu_{T-1}$ (Lemma~\ref{lem:martingale}):
\[
    G_{T-1}(\mu_{T-1}, s_{T-1})
    = \kappa - R\,\mathbb{E}\bigl[\max\{0,\,1-2\mu_T\}
    \mid \mu_{T-1}, s_{T-1}\bigr].
\]

Since $x \mapsto \max\{0, 1-2x\}$ is non-increasing and $\mu_T$ is FOSD-increasing in $\mu_{T-1}$ by Lemma~\ref{lem:fosd}(i), the expectation is non-increasing in $\mu_{T-1}$, so $G_{T-1}$ is non-decreasing in $\mu_{T-1}$.

\textit{Inductive step.} Assume $G_{t+1}(\mu_{t+1}, s_{t+1})$ is non-decreasing in $\mu_{t+1}$ on $[\tfrac{1}{2},1]$. By symmetry of the model in $s^+$ and $s^-$, $V_{t+1} - U = \max\{0,\, \mathcal{C}_{t+1} - U\}$ is non-increasing
in $\max\{\mu_{t+1}, 1-\mu_{t+1}\}$. Write:
\[G_t(\mu_t, s_t)
    = \kappa
    - \underbrace{
        \mathbb{E}\bigl[V_{t+1}(\mu_{t+1}, s_{t+1}) - 
        U(\mu_{t+1}, s_{t+1})\mid \mu_t, s_t\bigr]
      }_{\text{Term A}}  + 
      \underbrace{U(\mu_t, s_t) - 
        \mathbb{E}\bigl[U(\mu_{t+1}, s_{t+1})\mid \mu_t, s_t\bigr]
      }_{\text{Term B}}.\]
Term A is non-increasing in $\max\{\mu_{t+1}, 1-\mu_{t+1}\}$ by the inductive 
hypothesis, so its expectation is non-increasing in $\mu_t$ by 
Lemma~\ref{lem:fosd}(ii). Since Term A enters with a negative sign in 
$G_t$, it is therefore non-decreasing in $\mu_t$. Term B equals $-R\,\mathbb{E}[\max\{0, 1-2\mu_{t+1}\}\mid \mu_t, s_t]$, since the belief martingale $\mathbb{E}[\mu_{t+1} \mid \mu_t, s_t] = \mu_t$ (Lemma~\ref{lem:martingale}) and the time-invariance of $U(\mu)$, similarly to the argument in the base case. This is non-decreasing in $\mu_t$ by Lemma~\ref{lem:fosd}(i). Thus, both terms are non-decreasing in $\mu_t$, so $G_t$ is non-decreasing. 

Thus, the continuation set $\{\mu_t \in [\tfrac{1}{2},1] : G_t(\mu_t, s_t) < 0\}$ is a lower set of the feasible beliefs (or equivalently feasible vote leads), possibly empty. Define $d_t^*(s_t) := 1 + \max\{d_t : G_t(\mu_t, s_t) < 0\}, \max\emptyset := -1$. Then $d_t^*(s_t) \geq 0$ is unique, and continuation is optimal if and only if $|s_t^+ - s_t^-| < d_t^*(s_t)$, with ties resolved in favour of stopping, consistent with the continuation set $\{\mathcal{C}_t > U\}$ used throughout. 

We established the result for $\mu_t \in [\tfrac{1}{2},1]$. The result for
$\mu_t \in [0,\tfrac{1}{2}]$ follows by symmetry of the model in 
$s_t^+$ and $s_t^-$, which implies that $G_t(\mu_t, s_t)$ is 
non-increasing on $[0, \tfrac{1}{2}]$, negative at $\mu_t = 0$, and 
the threshold structure holds by an identical argument. When $s_t = N$, no further votes can arrive, so  $\mathcal{C}_t(\mu_t, N) = U(\mu_t, N) - \kappa < U(\mu_t, N)$ for all $\mu_t$, giving $d_t^*(N) = 0$.

\subsection{Proof of Proposition~\ref{prop:comparative_statics}}
\label{app:prop1-proof}
Before we prove the proposition, we show the submartingale property of the value function in the batch size, used in the proofs of parts (ii) and (iii). We denote the expected value of the planner's problem from period $t$ onward, conditional on the current tally $(s^+, s^-)$ and the next batch of size $a$, with
$\mathcal{I}_t(a; s^+, s^-) := \mathbb{E}\!\left[
    V_t(s^+ + a^+, \; s^- + a - a^+)
    \mid a, \, s^+, \, s^-
\right].$
 
\begin{lemma}
\label{lem:submartingale}
For any state $(s^+, s^-, t) \in \mathcal{S}$,
and batch sizes $0 \leq a \leq a' \leq N - s^+ - s^-$, $\mathcal{I}_t(a; s^+, s^-) \;\leq\; \mathcal{I}_t(a'; s^+, s^-)$.
\end{lemma}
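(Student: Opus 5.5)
It suffices to treat $a'=a+1$ and chain the inequalities. The key observation is that a batch of size $a+1$ can be revealed sequentially: first a batch of size $a$, then one further vote. By Lemma~\ref{lem:bayesian_update}, the predictive law of the composition of $a+1$ votes given $(s^+,s^-)$ equals the law obtained by first drawing $a^+$ from the Beta-Binomial$(a,s^++1,s^-+1)$ and then drawing the last vote from the Beta-Binomial$(1,s^++a^++1,s^-+a-a^++1)$. This is the exchangeability and Pólya-urn structure already used in the proof of Lemma~\ref{lem:fosd}(ii). Hence
\[
\mathcal{I}_t(a+1;s^+,s^-)=\mathbb{E}\bigl[\,\mathbb{E}[V_t(\sigma^++\xi,\sigma^-+1-\xi)\mid \sigma^+,\sigma^-]\;\big|\;a,s^+,s^-\bigr],
\]
where $(\sigma^+,\sigma^-)=(s^++a^+,s^-+a-a^+)$ and $\xi$ is the single extra vote. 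It therefore suffices to prove the one-step inequality
\[
\mathbb{E}\bigl[V_t(\sigma^++\xi,\sigma^-+1-\xi)\mid\sigma^+,\sigma^-\bigr]\ \ge\ V_t(\sigma^+,\sigma^-)
\]
at every state with $\sigma^++\sigma^-<N$. In words, $V_t$ is a submartingale along the one-vote refinement of the tally at fixed date $t$.

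I would prove this one-step inequality by backward induction on $t$, with an inner argument that compares the stopping and continuation branches. For the stopping part, $U=R\max\{\mu,1-\mu\}$ is convex in $\mu$, and $\mu$ is a martingale under the extra vote by Lemma~\ref{lem:martingale}. Jensen then gives $\mathbb{E}[U(\sigma+\xi)]\ge U(\sigma)$, which settles $t=T$, where $V_T=U$.

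For $t<T$, write $V_t=\max\{U,\mathcal{C}_t\}$. Since $\mathbb{E}[\max\{U,\mathcal{C}_t\}]\ge\max\{\mathbb{E}U,\mathbb{E}\mathcal{C}_t\}$, it suffices to show $\mathbb{E}[\mathcal{C}_t(\sigma+\xi)]\ge\mathcal{C}_t(\sigma)$. Here $\mathcal{C}_t(\sigma)=-\kappa+\sum_b p_{t+1}(b)\,\mathcal{I}_{t+1}(b;\sigma)$. Taking the extra vote first and then the next-period batch of size $b$ amounts, by exchangeability, to observing $b+1$ votes at $t+1$, with the first one absorbed into the current tally. So $\mathbb{E}[\mathcal{I}_{t+1}(b;\sigma+\xi)]$ is the expectation of $V_{t+1}$ after the batch of size $b$ and then one further vote. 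The induction hypothesis at $t+1$, applied after the $b$-batch, gives $\mathbb{E}[\mathcal{I}_{t+1}(b;\sigma+\xi)]\ge \mathcal{I}_{t+1}(b;\sigma)$.

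The main obstacle is the censoring of $p_{t+1}$. After the extra vote the remaining pool shrinks by one, so the censored mass at the top changes: $p_{t+1}(\cdot\mid s+1)$ lumps $b=N-s-1$ and above together. I would handle this by noting that the state after the extra vote plus a draw $b\ge N-s-1$ is the exhausted state $s=N$. There all votes are revealed and $V_{t+1}$ equals its full-information value, which dominates every coarser tally by Jensen, since $U$ is convex and the posterior is a martingale. The censored term is therefore at least the uncensored comparison. Finally, applying the outer expectation over $a^+$ via the tower property yields $\mathcal{I}_t(a+1)\ge\mathcal{I}_t(a)$, and the case $a'\ge a$ follows by induction on $a'-a$.
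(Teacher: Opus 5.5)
Your proposal is correct and is essentially the paper's proof. It uses the same reduction to $a'=a+1$, the same backward induction on $t$ with Jensen for the stopping branch and $\mathbb{E}\max\ge\max\mathbb{E}$, and the same exchangeability identity turning the extra vote plus the next-period batch of size $b$ into a batch of size $b+1$ at $t+1$.

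The only difference is the censoring. The paper absorbs it in one line via $\min\{A_{t+1}+1,N-s\}\ge\min\{A_{t+1},N-s\}$ and monotonicity of $\mathcal{I}_{t+1}$. Your exhausted-state comparison also works, but Jensen on $U$ alone does not bound the continuation branch. The cleaner justification is that the exhausted-state value equals $R\ge V_{t+1}$ everywhere, or simply your own induction hypothesis.
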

 
\begin{proof} It suffices to prove the one-step version,
\begin{equation}
    \label{eq:onestep}
    \mathcal{I}_t(a; s^+, s^-)
    \;\leq\;
    \mathcal{I}_t(a+1; s^+, s^-)
    \qquad
    \text{for all } 0 \leq a < N - s,
\end{equation}
since iterating gives $\mathcal{I}_t(a; \cdot) \leq \mathcal{I}_t(a+1; \cdot)
\leq \cdots \leq \mathcal{I}_t(a'; \cdot)$ for any $a \leq a'$.
We prove~\eqref{eq:onestep} by backward induction on $t$.

\textit{Terminal state.} We have
$\mathcal{I}_T(a; s_{T-1}^+, s_{T-1}^-)
= \mathbb{E}\!\left[
    U(s_{T-1}^+ + a^+, \; s_{T-1}^- + a - a^+)
    \mid a, \, s_{T-1}^+, \, s_{T-1}^-
\right].$

Let $\mu_T := \mu(s_{T-1}^+ + a^+, \, s_{T-1}^- + a - a^+)$ denote the posterior belief after observing the $a$-batch, and $\tilde\mu_T$ the updated belief after one additional vote $v^+ \in \{0,1\}$. By Lemma~\ref{lem:martingale} in Appendix~\ref{app:proofT1}, $\mathbb{E}[\tilde\mu_T \mid \mu_T] = \mu_T$. Since the expected payoff from stopping is convex in $\mu$, Jensen's inequality gives $\mathbb{E}[U(\tilde\mu_T) \mid \mu_T] \geq U(\mu_T)$ for each realization of $\mu_T$. Taking expectations over the distribution of $\mu_T$,
\[
    \mathcal{I}_T(a+1; s_{T-1}^+, s_{T-1}^-)
    = \mathbb{E}_{a^+, v^+}[U(\tilde\mu_T)]
    = \mathbb{E}_{a^+}\!\left[
        \mathbb{E}[U(\tilde\mu_T) \mid \mu_T]
      \right]
    \;\geq\;
    \mathbb{E}_{a^+}[U(\mu_T)]
    = \mathcal{I}_T(a; s_{T-1}^+, s_{T-1}^-).
\]
\textit{Inductive step.} Suppose Inequality~\eqref{eq:onestep} holds for $t+1$, \ie $\mathcal{I}_{t+1}(\cdot\,;\, s_t^+, s_t^-)$ is non-decreasing in its first argument for every state $(s_t^+, s_t^-)$. We prove it for $t$.
 
Let $(s_t^+, s_t^-) := (s_{t-1}^+ + a^+, \, s_{t-1}^- + a - a^+)$ for the state after the $a$-batch. Let $v^+ \in \{0,1\}$ be the extra vote, drawn from pool $N - s_t$
conditional on $(s_t^+, s_t^-)$, and write
$(\tilde{s}_t^+, \tilde{s}_t^-) := (s_t^+ + v^+, \, s_t^- + 1 - v^+)$
for the state after the $(a+1)$-batch. Our goal is to show for every realization of $a^+$ that
\begin{equation}
    \label{eq:pointwise}
    \mathbb{E}_{v^+}\!\left[V_t(\tilde{s}_t^+, \tilde{s}_t^-)\right]
    \;\geq\;
    V_t(s_t^+, s_t^-).
\end{equation}

Following from the Bellman equation,
$    \mathbb{E}_{v^+}\!\left[V_t(\tilde{s}_t^+, \tilde{s}_t^-)\right]
    \;\geq\;
    \max\{
        \mathbb{E}_{v^+}\!\left[U(\tilde{s}_t^+, \tilde{s}_t^-)\right], 
        \mathbb{E}_{v^+}\!\left[\mathcal{C}_t(\tilde{s}_t^+, \tilde{s}_t^-)\right]
    \}.$

Applying the argument used in the base case, we bound the first term
on the right:
\[
\mathbb{E}_{v^+}\!\left[U(\tilde{s}_t^+, \tilde{s}_t^-)\right]
= \mathbb{E}[U(\tilde\mu_t) \mid \mu_t]
\;\geq\;
U(\mu_t)
= U(s_t^+, s_t^-).
\]

We now show that the second term can be bounded as well. After the extra vote $v^+$ at time $t$, the time-$(t+1)$ batch has size $a_{t+1} = \min\{A_{t+1}, \, N - s_t - 1\}$. Define the post-period-$(t+1)$ state $(s_{t+1}^+, s_{t+1}^-) :=
(\tilde{s}_t^+ + a_{t+1}^+, \; \tilde{s}_t^- + a_{t+1} - a_{t+1}^+)$.

By exchangeability of sampling without replacement, the joint draw of
$v^+$ at time $t$ followed by the $a_{t+1}$-batch at time $t+1$ is
distributed as a single batch of size $a_{t+1}+1$ drawn from the pool
of $N - s_t$ voters available at the start of time $t+1$ (\ie, before
$v^+$ is drawn). Hence, for any fixed $a_{t+1}$, $\mathbb{E}_{v^+, \, a_{t+1}^+}\!\left[
        V_{t+1}(s_{t+1}^+, s_{t+1}^-)
        \;\middle|\; a_{t+1}
    \right]
    = \mathcal{I}_{t+1}(a_{t+1}+1; \, s_t^+, s_t^-).$
Expanding $\mathcal{C}_t$ at the state $(\tilde{s}_t^+, \tilde{s}_t^-)$,
\begin{align*}
    \mathbb{E}_{v^+}\!\left[\mathcal{C}_t(\tilde{s}_t^+, \tilde{s}_t^-)\right]
    &= -\kappa
    + \mathbb{E}_{v^+, \, A_{t+1}, \, a_{t+1}^+}\!\left[
        V_{t+1}(s_{t+1}^+, s_{t+1}^-)
    \right] \\
    &= -\kappa
    + \mathbb{E}_{A_{t+1}}\!\left[
        \mathcal{I}_{t+1}(a_{t+1}+1; \, s_t^+, s_t^-)
    \right] \\
    &= -\kappa
    + \mathbb{E}_{A_{t+1}}\!\left[
        \mathcal{I}_{t+1}\!\left(
            \min\{A_{t+1}+1, \, N - s_t\}; \, s_t^+, s_t^-
        \right)
    \right],
\end{align*}
where the second equality applies the exchangeability identity,
collapsing the averages over $v^+$ and $a_{t+1}^+$ conditional on
$a_{t+1}$ (valid because $A_{t+1}$ is independent of $v^+$ and
$a_{t+1}^+$ conditional on $(s_t^+, s_t^-)$), and the third uses
$a_{t+1} +1 = \min\{A_{t+1}+1, \, N - s_t\}$.
Since $\min\{A_{t+1}+1, \, N - s_t\} \geq \min\{A_{t+1}, \, N - s_t\}$
and $\mathcal{I}_{t+1}(\cdot\,;\, s_t^+, s_t^-)$ is non-decreasing by
the inductive hypothesis,
\[
    \mathbb{E}_{v^+}\!\left[\mathcal{C}_t(\tilde{s}_t^+, \tilde{s}_t^-)\right]
    \;\geq\;
    -\kappa + \mathbb{E}_{A_{t+1}}\!\left[
        \mathcal{I}_{t+1}\!\left(
            \min\{A_{t+1}, \, N - s_t\}; \, s_t^+, s_t^-
        \right)
    \right]
    = \mathcal{C}_t(s_t^+, s_t^-).
\]

We have shown
$\mathbb{E}_{v^+}\!\left[V_t(\tilde{s}_t^+, \tilde{s}_t^-)\right]
\;\geq\;
\max\!\left\{U(s_t^+, s_t^-), \, \mathcal{C}_t(s_t^+, s_t^-)\right\}
= V_t(s_t^+, s_t^-),$
which establishes~\eqref{eq:pointwise}. Averaging over the
distribution of $a^+$ gives
$\mathcal{I}_t(a+1; s_{t-1}^+, s_{t-1}^-)
\geq \mathcal{I}_t(a; s_{t-1}^+, s_{t-1}^-)$,
completing the induction.
\end{proof}

\begin{proof}
 
\noindent\textbf{Part (i).}
Define $\hat{V}_t \equiv V_t/R$, $\hat{U}(\mu) \equiv \max\{\mu, 1-\mu\}$,
and $\hat{\mathcal{C}}_t \equiv \mathcal{C}_t/R 
= -1/\rho + \mathbb{E}[\hat{V}_{t+1}]$, so the Bellman equation becomes
$\hat{V}_t(\mu,s;\rho) = \max\!\left\{\hat{U}(\mu),\;\hat{\mathcal{C}}_t(\mu,s;\rho)\right\},$ with transition kernel independent of $\rho$.

We show $\hat{V}_t$ is weakly increasing in $\rho$ by backward induction.
At $t=T$, $\hat{V}_T = \hat{U}$ is independent of $\rho$. Suppose 
$\hat{V}_{t+1}$ is weakly increasing in $\rho$. Then 
$\mathbb{E}[\hat{V}_{t+1}]$ is weakly increasing in $\rho$, so 
$\hat{\mathcal{C}}_t = -1/\rho + \mathbb{E}[\hat{V}_{t+1}]$ is strictly 
increasing in $\rho$. As the maximum of a $\rho$-independent function 
and a strictly increasing function, $\hat{V}_t$ is weakly increasing 
in $\rho$.

By Theorem~\ref{thm:threshold}, the continuation region at $(t,s)$ 
is $\{\mu : \hat{\mathcal{C}}_t(\mu,s;\rho) > \hat{U}(\mu)\}$. Since 
$\hat{\mathcal{C}}_t$ is strictly increasing in $\rho$ and $\hat{U}$ 
is independent of $\rho$, this set is monotone non-decreasing in $\rho$. 
By the threshold structure, this implies 
$d_t^*(s;\rho',\boldsymbol{\lambda}) \leq d_t^*(s;\rho,\boldsymbol{\lambda})$ 
for all $\rho' \leq \rho$. 

 
\bigskip
\noindent\textbf{Part (ii).}
Let $V_t^{\boldsymbol{\lambda}}$ and $V_t^{\boldsymbol{\lambda'}}$ denote the value functions
under arrival rates $\boldsymbol{\lambda}$ and $\boldsymbol{\lambda'}$.
We prove $V_t^{\boldsymbol{\lambda'}}(s^+,s^-) \leq V_t^{\boldsymbol{\lambda}}(s^+,s^-)$
for all $(s^+,s^-,t)\in\mathcal{S}$ by backward induction on $t$.
 
\smallskip
\noindent\textit{Terminal state.}
$V_T^{\boldsymbol{\lambda'}} = U = V_T^{\boldsymbol{\lambda}}$,
since the terminal value is independent of $\boldsymbol{\lambda}$. 
 
\smallskip
\noindent\textit{Inductive step.}
Suppose $V_{t+1}^{\boldsymbol{\lambda'}}(s^+,s^-) \geq V_{t+1}^{\boldsymbol{\lambda}}(s^+,s^-)$
for all $(s^+,s^-)$.
It suffices to show
$\mathcal{C}_t^{\boldsymbol{\lambda'}}(s^+,s^-) \geq \mathcal{C}_t^{\boldsymbol{\lambda}}(s^+,s^-)$
for all $(s^+,s^-)$,
since then
$V_t^{\boldsymbol{\lambda'}}
= \max\{U, \mathcal{C}_t^{\boldsymbol{\lambda'}}\}
\geq \max\{U, \mathcal{C}_t^{\boldsymbol{\lambda}}\}
= V_t^{\boldsymbol{\lambda}}$. The $\kappa$-terms cancel, giving
\begin{align*}
    \mathcal{C}_t^{\boldsymbol{\lambda'}} - \mathcal{C}_t^{\boldsymbol{\lambda}}
    &=
        \mathbb{E}_{\lambda'(t+1)}\!\left[V_{t+1}^{\boldsymbol{\lambda'}}\right]
        -
        \mathbb{E}_{\lambda'(t+1)}\!\left[V_{t+1}^{\boldsymbol{\lambda}}\right]
    +
        \mathbb{E}_{\lambda'(t+1)}\!\left[V_{t+1}^{\boldsymbol{\lambda}}\right]
        -
        \mathbb{E}_{\lambda(t+1)}\!\left[V_{t+1}^{\boldsymbol{\lambda}}\right].
\end{align*}

By the inductive hypothesis, $V_{t+1}^{\boldsymbol{\lambda'}} \geq V_{t+1}^{\boldsymbol{\lambda}}$
everywhere, so averaging over any kernel preserves the inequality, thus $\mathbb{E}_{\lambda'(t+1)}\!\left[V_{t+1}^{\boldsymbol{\lambda'}}\right]
        -
        \mathbb{E}_{\lambda'(t+1)}\!\left[V_{t+1}^{\boldsymbol{\lambda}}\right]\geq 0$.

By assumption $\lambda'(t+1) \geq \lambda(t+1)$.
By monotonicity of truncation, the censored Poisson batch size
$\min\{A', N-s\}$ with $A' \sim \mathrm{Poisson}(\lambda'(t+1))$
first-order stochastically dominates
$\min\{A, N-s\}$ with $A \sim \mathrm{Poisson}(\lambda(t+1))$.
Since $\mathcal{I}_{t+1}^{\boldsymbol{\lambda}}(a)
:= \mathbb{E}[V_{t+1}^{\boldsymbol{\lambda}} \mid a, s^+, s^-]$
is non-decreasing in $a$ by Lemma~\ref{lem:submartingale},
$\mathbb{E}_{\lambda'(t+1)}\!\left[V_{t+1}^{\boldsymbol{\lambda}}\right]
        -
        \mathbb{E}_{\lambda(t+1)}\!\left[V_{t+1}^{\boldsymbol{\lambda}}\right]\geq 0$. Hence, $\mathcal{C}_t^{\boldsymbol{\lambda'}} \geq \mathcal{C}_t^{\boldsymbol{\lambda}}$
and $V_t^{\boldsymbol{\lambda'}} \geq V_t^{\boldsymbol{\lambda}}$ everywhere,
completing the induction.
 
\smallskip
\noindent\textit{Threshold comparison.}
From $\mathcal{C}_t^{\boldsymbol{\lambda'}}(\mu,s) \geq \mathcal{C}_t^{\boldsymbol{\lambda}}(\mu,s)$
pointwise, it follows that
$G_t^{\boldsymbol{\lambda'}} := U - \mathcal{C}_t^{\boldsymbol{\lambda'}}
\leq U - \mathcal{C}_t^{\boldsymbol{\lambda}} =: G_t^{\boldsymbol{\lambda}}$
pointwise. Hence, $\{\mu : G_t^{\boldsymbol{\lambda}}(\mu,s) < 0\} \subseteq \{\mu : G_t^{\boldsymbol{\lambda'}}(\mu,s) < 0\}$, so the continuation set under $\boldsymbol{\lambda}$ is contained in that under $\boldsymbol{\lambda'}$. By the definition of the threshold in Appendix~\ref{app:proofT1}, this gives $d_t^*(s;\boldsymbol{\lambda'}) \geq d_t^*(s;\boldsymbol{\lambda})$.
 
\bigskip
\noindent\textbf{Part (iii).}
Assume $\lambda(\tau) \geq \lambda(\tau+1)$ for all $\tau \in \{1,\ldots,T-1\}$.
We prove $V_t(s^+,s^-) \geq V_{t+1}(s^+,s^-)$ for all
$(s^+,s^-,t) \in \mathcal{S}$ by backward induction on $T-t$.
 
\smallskip
\noindent\textit{Terminal state.}
$V_{T-1} = \max\{U, \mathcal{C}_{T-1}\} \geq U = V_T$. 
 
\smallskip
\noindent\textit{Inductive step.}
Suppose $V_{t+1}(s^+,s^-) \geq V_{t+2}(s^+,s^-)$ for all $(s^+,s^-)$.
It suffices to show $\mathcal{C}_t(s^+,s^-) \geq \mathcal{C}_{t+1}(s^+,s^-)$
for all $(s^+,s^-)$,
since then
$V_t = \max\{U, \mathcal{C}_t\} \geq \max\{U, \mathcal{C}_{t+1}\} = V_{t+1}$.
 
Note that $\mathcal{C}_t = -\kappa + \mathbb{E}_{\lambda(t+1)}[V_{t+1}]$
and $\mathcal{C}_{t+1} = -\kappa + \mathbb{E}_{\lambda(t+2)}[V_{t+2}]$,
and
\begin{align*}
    \mathcal{C}_t - \mathcal{C}_{t+1}
    &=
        \mathbb{E}_{\lambda(t+1)}\!\left[V_{t+1} - V_{t+2}\right]
    +
        \mathbb{E}_{\lambda(t+1)}\!\left[V_{t+2}\right]
        -
        \mathbb{E}_{\lambda(t+2)}\!\left[V_{t+2}\right]
\end{align*}
By the inductive hypothesis, $V_{t+1} \geq V_{t+2}$ everywhere, so averaging under any kernel preserves the inequality, thus $\mathbb{E}_{\lambda(t+1)}\!\left[V_{t+1} - V_{t+2}\right]\geq 0$. 

Since $\lambda(t+1) \geq \lambda(t+2)$ by assumption,
the censored Poisson batch size
$\min\{A_{t+1}, N-s\}$ with $A_{t+1} \sim \mathrm{Poisson}(\lambda(t+1))$
first-order stochastically dominates
$\min\{A_{t+2}, N-s\}$ with $A_{t+2} \sim \mathrm{Poisson}(\lambda(t+2))$,
by the monotonicity of truncation and the standard FOSD ordering
of Poisson distributions in their rate parameter.
Since
$\mathcal{I}_{t+2}(a) := \mathbb{E}[V_{t+2} \mid a, s^+, s^-]$
is non-decreasing in $a$ by Lemma~\ref{lem:submartingale},
it follows that $\mathbb{E}_{\lambda(t+1)}\!\left[V_{t+2}\right]-\mathbb{E}_{\lambda(t+2)}\!\left[V_{t+2}\right]\geq 0$. Hence, $\mathcal{C}_t \geq \mathcal{C}_{t+1}$ and $V_t \geq V_{t+1}$ everywhere,
completing the induction.
 
\smallskip
\noindent\textit{Threshold comparison.}
From $\mathcal{C}_t(\mu,s) \geq \mathcal{C}_{t+1}(\mu,s)$ pointwise,
$G_t := U - \mathcal{C}_t \leq U - \mathcal{C}_{t+1} =: G_{t+1}$ pointwise.
By the proof of Theorem~\ref{thm:threshold}, both
$G_t(\cdot,s)$ and $G_{t+1}(\cdot,s)$ are non-decreasing in $\mu$
on $[\tfrac{1}{2},1]$. From $G_t \leq G_{t+1}$ pointwise, $\{\mu : G_{t+1}(\mu, s) < 0\} \subseteq \{\mu : G_t(\mu, s) < 0\}$, so the continuation set at $t+1$ is contained in that at $t$. By the definition of the threshold in Appendix~\ref{app:proofT1}, this gives $d_{t+1}^*(s) \leq d_t^*(s)$.
\end{proof}

\subsection{Proof of Theorem~\ref{thm:voting_rule}}
\label{app:T2}
The proof shows that the closed-form rule $\hat\phi$ coincides with the optimal rule $\phi^*$ to the stated order. It is divided in five steps. Step~1 reduces the planner's two-dimensional tally to a single sufficient statistic: her confidence in the current leader depends on the counts only through the turnout-scaled lead, and does so in a Gaussian way. Step~2 identifies the time scale on which the statistic evolves and derives, from the exact posterior, the planner's predictive law for the terminal statistic. Step~3 records the Gaussian belief dynamics this law implies. Step~4 derives two benchmark strategies whose indifference points bracket $\phi^*$. Step~5 evaluates those two points and finds that both equal $2\log\rho-2\log\log\rho$ up to an additive constant, thus proving the theorem. By relabeling the alternatives we take $d\geq0$ throughout w.l.o.g.

\paragraph{Step 1: Local quadratic approximation of the evidence statistic.}

We define a tractable scalar statistic for the posterior in order to evaluate the optimal stopping rule where the decision is non-trivial, \ie for $s\asymp N$ and vote shares near the boundary $\hat p=\tfrac{s^+}{s}\approx\tfrac12$. By Lemma~\ref{lem:bayesian_update} the posterior distribution of unseen supporters is $\mathrm{BetaBinomial}(N-s,\,s^++1,\,s^-+1)$, which is log-concave. Let $l(m):=\log(\text{Prob}(M^+=m\mid s^+,s^-))$ and define the evidence statistic $\ell(s^+,s^-)$ as the difference in log-posterior mass between the mode $\hat m$ and the tie state $m_0=N/2$:
\[
\ell(s^+,s^-):=l(\hat m)-l(m_0).
\]
Although $m$ takes integer values, the mass function is a ratio of Gamma functions and so extends to a smooth (real-analytic) function of a real argument. Taylor expanding $l(m)$ around the mode $\hat{m}$ yields:
\begin{equation}
\label{eq:taylor_expansion}
l(m_0)=l(\hat m)+l'(\hat m)(m_0-\hat m)
+\tfrac12 l''(\hat m)(m_0-\hat m)^2
+O\bigl((m_0-\hat m)^3\cdot l'''(\hat m)\bigr).
\end{equation}
Observe that $l'(\hat m)=0$ because $\hat{m}$ is the maximum a posteriori estimate,
and any $O(1)$ integer-rounding difference between $\hat m$ and $N\hat p$ is absorbed into the remainder. Note that
\begin{equation}
\label{eq:quad-reduced}
-l''(\hat m)\,(m_0-\hat m)^2\,
=\frac{(m_0-\hat m)^2}{\sigma^2}\,(1+o(1)),
\end{equation}
following by the local Gaussian/Laplace approximation to the Beta–Binomial posterior, which implies that $-l''(\hat m)$ is asymptotically the inverse posterior variance \citep[Ch.~10]{vanderVaart2000}. We evaluate the two posterior moments in \eqref{eq:quad-reduced}: the variance $\sigma^2$ and the offset $m_0-\hat m$.
Evaluating the variance of $\mathrm{BetaBinomial}(N-s,\, s^++1,\, s^-+1)$ and noting that $\hat p - \tfrac12 = O(\sqrt{\log\rho / s}) = o(1)$ on the region of interest,
\begin{equation}
    \sigma^2 = (N-s) \cdot \frac{(s^++1)(s^-+1)}{(s+2)^2} \cdot \frac{N+2}{s+3} = \frac{N(N-s)}{4s} \bigl(1 + o(1)\bigr). 
    \label{eq:var-approx}
\end{equation}
The posterior mean evaluates to $\mathbb{E}[M^+\mid s^+,s^-] = s^+ + (N-s)\frac{s^++1}{s+2} = N\frac{s^+}{s} + O(1)$. For the near-symmetric posterior the mode agrees with the mean to $O(1)$, thus
\begin{equation}
    m_0 - \hat m  = -\frac{Nd}{2s} + O(1).
    \label{eq:offset-approx}
\end{equation}

Substituting these terms into \eqref{eq:quad-reduced} yields $-l''(\hat m)\,(m_0-\hat m)^2=\frac{d^2}{s} \cdot \frac{N}{N - s}(1+o(1))$.

Next, we show that the third-order Taylor remainder in \eqref{eq:taylor_expansion} is negligible relative to the leading quadratic term on the region of interest. By standard properties of log-density expansions around the mode \citep[Ch.~10]{vanderVaart2000}, the third derivative $l'''(\hat m)$ can be expressed as $l'''(\hat m) = \frac{\gamma}{\sigma^3}\bigl(1+o(1)\bigr)$, where $\gamma$ is the posterior skewness. Combining with $-l''(\hat m) = \sigma^{-2}\bigl(1+o(1)\bigr)$, the relative ratio of the cubic remainder to the quadratic term in \eqref{eq:taylor_expansion} simplifies to:
\begin{equation}
\label{eq:cubic-quadratic-ratio}
\frac{|l'''(\hat m)|\,|m_0-\hat m|^3}{|l''(\hat m)|\,|m_0-\hat m|^2}
= \frac{1}{3} |\gamma| Z \bigl(1+o(1)\bigr),
\end{equation}
where $Z := \frac{\hat m- m_0}{\sigma}$ is the standardized score. Substituting the offset from \eqref{eq:offset-approx} and the variance from \eqref{eq:var-approx}:
\begin{equation}
\label{eq:Z-score-def}
Z = \frac{N d / (2s)}{\sqrt{N(N-s)} / (2\sqrt{s})} \bigl(1+o(1)\bigr) = \frac{d\sqrt N}{\sqrt{s(N-s)}} \bigl(1+o(1)\bigr).
\end{equation}

The skewness $\gamma$ of the $\mathrm{BetaBinomial}(N-s, s^++1, s^-+1)$ posterior, evaluated near the boundary ($\hat p \approx 1/2$, so $a \approx s/2$ and $b \approx s/2$) is:
\begin{equation}
\label{eq:exact-skewness}
\gamma = -d\frac{2N-s+2}{s+4} \cdot \frac{2}{s}\sqrt{\frac{s+3}{(N-s)(N+2)}}=-\frac{2 d (2N-s)}{s^{3/2} \sqrt{N(N-s)}} \bigl(1+o(1)\bigr).
\end{equation}
Therefore,
\begin{equation}
\frac{1}{3} |\gamma| Z \bigl(1+o(1)\bigr) = \frac{1}{3} \frac{2 d^2 N (2N-s)}{s^2 N (N-s)} \bigl(1+o(1)\bigr) = \frac{1}{3} \frac{2 Z^2 (2N-s)}{N s} \bigl(1+o(1)\bigr).
\end{equation}

Since $2N-s<2N$, the scaling factor satisfies $(2N-s)/(Ns)<2/s$. On
$\mathcal S_\varepsilon$, turnout satisfies $s\ge\varepsilon N$ and
$Z^2\le C_\varepsilon^2\log\rho$, so the cubic-to-quadratic ratio
\eqref{eq:cubic-quadratic-ratio} is bounded above by a constant multiple of
$\log\rho/N$. Since $\log\rho\le\varepsilon N^{1/3}$, this bound vanishes as
$N\to\infty$. Hence
$
2\ell(s^+,s^-)
=
\frac{d^2}{s}\frac{N}{N-s}\,(1+o(1))
$
uniformly on $\mathcal S_\varepsilon$.

For the posterior mass function, however, the exponent must be controlled in
absolute rather than relative terms. Multiplying the preceding relative error
by the quadratic term $Z^2\le C_\varepsilon^2\log\rho$ yields an absolute
exponent error bounded by a constant multiple of
$(\log\rho)^2/N$, which also vanishes under the growth condition. Therefore,
uniformly for $|m-\hat m|\le C_\varepsilon\sigma\sqrt{\log\rho}$,
\begin{equation}
\label{eq:pmf-window}
\Pr(M^+=m\mid s^+,s^-)
=
e^{l(\hat m)}
\exp\!\Bigl\{-\frac{(m-\hat m)^2}{2\sigma^2}\Bigr\}
\bigl(1+o(1)\bigr).
\end{equation}
By log-concavity, the posterior mass outside this window is at most
$\rho^{-C_\varepsilon^2/3}$.

Summing \eqref{eq:pmf-window} over the window gives
$e^{l(\hat m)}=(\sigma\sqrt{2\pi})^{-1}(1+o(1))$. Summing between the tie
state $m_0=N/2$ and $\hat m$, and using that the lattice sum differs from the
corresponding Gaussian integral by at most its largest term, yields
\begin{equation}
\label{eq:mu-Phi}
\mu-\tfrac12
=
\bigl(\Phi(Z)-\tfrac12\bigr)\bigl(1+o(1)\bigr)
+
O(N^{-1/2}),
\qquad
Z:=\operatorname{sgn}(d)\sqrt{2\ell},
\end{equation}
uniformly on $\mathcal S_\varepsilon$ with $Z^2\le C_\varepsilon^2\log\rho$.
The local approximation \eqref{eq:pmf-window} will also be used in Step~2 to
obtain relative tail asymptotics at distances of order $\sqrt{\log\rho}$ from
the mode.
\paragraph{Step 2: The information clock and the predictive law.}

We now derive the dynamics of the evidence scale $2\ell$. Parameterize the unknown state as $M^+=N/2+\delta$ and set
$\theta:=2\delta/\sqrt N$. This is the local-alternative scaling: a
majority with $\delta\gg\sqrt N$ is identifiable from any substantial
sample, while one with $\delta\ll\sqrt N$ is undetectable even at full
turnout, so $\theta=O(1)$ is the non-trivial range. Let
$v_1,v_2,\dots\in\{0,1\}$ be the votes in arrival order, $\F_s$ the
information after $s$ of them, and $\tilde d_s=\sum_{k\le s}(2v_k-1)$ the
signed lead.

To eliminate the mean-reverting pull-back effect caused by sampling without replacement, we first establish that the following centered, rescaled state process is a martingale.

\begin{lemma}
\label{lem:mart-exact}
Conditional on $M^+$, the process $M_s:=(\tilde d_s-2\delta)/(N-s)$,
$s=0,\dots,N-1$, is a martingale with respect to $(\F_s)$. Its scaled increments
are
\begin{equation}
\label{eq:increment}
\sqrt N\,(M_{k+1}-M_k)
=\frac{\sqrt N\,(2v_{k+1}-1)}{N-k-1}
+\frac{\sqrt N\,(\tilde d_k-2\delta)}{(N-k-1)(N-k)},
\end{equation}
with conditional variance
\begin{equation}
\label{eq:step-var}
\operatorname{Var}\bigl(\sqrt N(M_{k+1}-M_k)\mid\F_k\bigr)
=\frac{N\cdot 4p_k(1-p_k)}{(N-k-1)^2},
\qquad p_k=\frac{M^+-s_k^+}{N-k}.
\end{equation}
\end{lemma}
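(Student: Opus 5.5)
The plan is to recognize $M_s$ as a function of the composition of the \emph{unseen} part of the electorate and then verify the martingale property by a one-step conditional expectation. Conditional on $M^+$, after $s$ votes there remain $M^+-s^+_s$ supporters and $(N-M^+)-s^-_s$ opponents among the $N-s$ unseen voters. Their difference is
\[
(2M^+-N)-(s^+_s-s^-_s)=2\delta-\tilde d_s .
\]
Hence $M_s=-(2p_s-1)$, where $p_s=(M^+-s^+_s)/(N-s)$ is the supporter share of the remaining pool. This is the classical urn martingale: under sampling without replacement, the composition of what remains is a martingale.

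For the formal verification, fix $M^+$ and $k\le N-2$, so that all denominators are positive. Integrability is immediate because $|M_s|\le 1$. Since the arrival order is a uniformly random permutation, independent of preferences (the Poisson batching only groups consecutive votes and is independent of their content), the next vote satisfies $P(v_{k+1}=1\mid\F_k)=p_k$. Therefore
\[
\mathbb E[\tilde d_{k+1}-2\delta\mid\F_k]=\tilde d_k-2\delta+2p_k-1 .
\]
Substituting $2p_k-1=(2\delta-\tilde d_k)/(N-k)$ gives
\[
(\tilde d_k-2\delta)\,\frac{N-k-1}{N-k}.
\]
Dividing by $N-k-1$ yields $\mathbb E[M_{k+1}\mid\F_k]=M_k$.

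For the increment formula \eqref{eq:increment}, write $\tilde d_{k+1}-2\delta=(\tilde d_k-2\delta)+(2v_{k+1}-1)$. Then
\[
M_{k+1}-M_k=\frac{2v_{k+1}-1}{N-k-1}+(\tilde d_k-2\delta)\Bigl[\frac{1}{N-k-1}-\frac{1}{N-k}\Bigr].
\]
The bracket equals $1/\bigl((N-k-1)(N-k)\bigr)$, and multiplying by $\sqrt N$ gives \eqref{eq:increment}. The second term of the increment is $\F_k$-measurable, so the conditional variance comes only from $2v_{k+1}-1$. That variable is a $\pm1$ variable with success probability $p_k$, so its variance is $4p_k(1-p_k)$. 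Scaling by $N/(N-k-1)^2$ gives \eqref{eq:step-var}.

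None of these steps is a real obstacle. The only point needing care is justifying $P(v_{k+1}=1\mid\F_k)=p_k$ within the paper's batch-arrival model. I would argue it from exchangeability: conditional on $M^+$, the votes are revealed in uniformly random order, and the batch sizes are independent of vote content. This is the same fact used in Lemma~\ref{lem:fosd}(ii), where batches are revealed one vote at a time, and it gives the hypergeometric single-draw probability.
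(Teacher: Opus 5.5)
Your proposal is correct and follows essentially the same route as the paper. Both arguments run a one-step conditional expectation using $P(v_{k+1}=1\mid\F_k)=p_k$ and $2p_k-1=(2\delta-\tilde d_k)/(N-k)$, then split $\tilde d_{k+1}-2\delta$ to obtain the increment formula and read the variance off the $\pm1$ term. Your additions are sound points the paper leaves implicit: the identification $M_s=1-2p_s$ as the remaining-pool composition, the restriction $k\le N-2$, boundedness, and the exchangeability justification for the single-draw probability.
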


\begin{proof}
Given $\mathcal{F}_k$, $M^+ - s_k^+$ supporters remain among $N-k$ unseen voters. The conditional probability of a positive vote is $p_k = \frac{M^+ - s_k^+}{N-k}$. Using $2M^+ = N + 2\delta$ and $2s_k^+ = k + \tilde{d}_k$:
\begin{equation*}
\mathbb{E}[\tilde{d}_{k+1} - \tilde{d}_k \mid \mathcal{F}_k] = 2p_k - 1 = \frac{2\delta - \tilde{d}_k}{N-k}.
\end{equation*}
Taking the conditional expectation of $M_{k+1}$:
\begin{align*}
\mathbb{E}[M_{k+1} \mid \mathcal{F}_k] 
&= \frac{\tilde{d}_k + \mathbb{E}[\tilde{d}_{k+1} - \tilde{d}_k \mid \mathcal{F}_k] - 2\delta}{N-k-1} 
= \frac{\tilde{d}_k - 2\delta}{N-k} = M_k.
\end{align*}
Thus, $\mathbb{E}[M_{k+1} \mid \mathcal{F}_k] = M_k$ holds for all $N$.

Note that the increment $2v_{k+1} - 1$ takes values $\pm 1$ with probabilities $p_k$ and $1 - p_k$, so $\operatorname{Var}(2v_{k+1} - 1 \mid \mathcal{F}_k) = 4p_k(1-p_k)$. Using $\tilde{d}_{k+1} = \tilde{d}_k + (2v_{k+1} - 1)$, we express the scaled martingale increment:
\begin{equation*}
\sqrt{N}(M_{k+1} - M_k) = \frac{\sqrt{N}(2v_{k+1}-1)}{N-k-1} + \frac{\sqrt{N}(\tilde d_k - 2\delta)}{(N-k-1)(N-k)}.
\end{equation*}
Thus, the conditional variance is:
\begin{equation*}
\operatorname{Var}\left( \sqrt{N}(M_{k+1} - M_k) \;\middle|\; \mathcal{F}_k \right) 
= \operatorname{Var}\left( \frac{\sqrt{N}(2v_{k+1}-1)}{N-k-1} \;\middle|\; \mathcal{F}_k \right)
= N \frac{4p_k(1-p_k)}{(N-k-1)^2}.
\end{equation*}
\end{proof}



Summing the conditional variance \eqref{eq:step-var}, the predictable
quadratic variation of $\sqrt N(M_s-M_0)$ is
$\langle\sqrt N M\rangle_s=N\sum_{k=0}^{s-1}4p_k(1-p_k)/(N-k-1)^2$. In the local regime where $\theta = O(1)$, the proportion $p_k = \frac{1}{2} + o(1)$ uniformly in probability, so $4p_k(1-p_k) = 1 + o(1)$. Re-indexing $j = N-k-1$ and applying integral comparison to the tail sum yields:
\begin{align*}
\langle \sqrt{N} M \rangle_s 
&= N \sum_{j=N-s}^{N-1} \frac{1 + o(1)}{j^2} 
= N \left( \int_{N-s}^{N} \frac{dx}{x^2} + O\left(\frac{1}{(N-s)^2}\right) \right) (1 + o(1)) \\
&= N \left( \frac{1}{N-s} - \frac{1}{N} \right) + o(1) = \frac{s}{N-s}+o(1).
\end{align*}
We call $\tau(s)=s/(N-s)$ the information clock, as it accumulates at unit rate. We use the following notation
\begin{equation}
\label{eq:clock-app}
\tau:=\tau(s_t)=\frac{s_t}{N-s_t},\qquad
\tau_T:=\tau\bigl(\hat s_T\bigr)=\frac{\hat s_T}{N-\hat s_T},\qquad
\Delta\tau(t):=\tau_T-\tau.
\end{equation}
On $\mathcal S_\varepsilon$ all three quantities are bounded away from $0$ and $\infty$. The realized terminal turnout $s_T$ differs from $\hat s_T$ by the Poisson fluctuation of the arrivals, which is $O(\sqrt N)$. 
Let $\widehat X_s:=\sqrt N\,\tilde d_s/(N-s)$ denote the current lead rescaled by the residual pool. Since $\tau(\cdot)$ is a bijection we also index $\widehat X$ by the clock, writing $\widehat X_\tau$ for $\widehat X_s$ at $\tau=\tau(s)$. Since $\tilde d_s=(N-s)M_s+2\delta$ and $\sqrt N M_0=-\theta$, we have $\widehat X_s=\theta\,\tau(s)+\sqrt N\,(M_s-M_0)$, so by Lemma~\ref{lem:mart-exact} and the quadratic-variation computation above,
\begin{equation}
\label{eq:sampling-law}
\E[\widehat X_s\mid\theta]=\theta\,\tau(s),\qquad
\operatorname{Var}(\widehat X_s\mid\theta)=\E\langle\sqrt N M\rangle_s=\tau(s)\bigl(1+o(1)\bigr),
\end{equation}
the mean exactly. The lead $\tilde d_s$ given $M^+$ is a shifted hypergeometric variable, whose mass function is log-concave and a ratio of Gamma functions, so the Laplace expansion of Step~1 applies verbatim and $\widehat X_s\mid\theta$ is Gaussian to the accuracy established there. We write $\widehat X_\tau\mid\theta\sim N(\theta\tau,\tau)$.

Following Lemma~\ref{lem:bayesian_update}, conditional on $a$ future arrivals $a^+\sim\mathrm{BetaBinomial}(a,s^++1,s^-+1)$, the same family as the Step~1 posterior with $N-s$ replaced by $a$. The Laplace expansion \eqref{eq:pmf-window} and the tail bound $\rho^{-C_\varepsilon^2/3}$ therefore hold verbatim for $a^+$. Writing $\tau'=\tau(s+a)$, the terminal statistic $\widehat X_{s+a}$ has, by the martingale identity \eqref{eq:sampling-law} applied from date $t$,
\begin{equation}
\label{eq:predictive-moments}
\E[\widehat X_{s+a}\mid\F_t]=\widehat X_s\,\frac{\tau'}{\tau}\,(1+O(1/N)),\qquad
\operatorname{Var}(\widehat X_{s+a}\mid\F_t)=(\tau'-\tau)\frac{\tau'}{\tau}\,(1+O(1/N)).
\end{equation}

A direct computation links this back to Step~1: since $\tilde d_s=d$,
\begin{equation}
\label{eq:Z-X-app}
\frac{\widehat X_s^{\,2}}{\tau(s)}
=\frac{N\tilde d_s^{\,2}/(N-s)^2}{s/(N-s)}
=\frac{d^2}{s}\cdot\frac{N}{N-s}=2\ell,
\end{equation}
so $Z_\tau=\widehat X_\tau/\sqrt\tau$ is exactly the $z$-score of Step~1.

\paragraph{Step 3: Predictive belief dynamics.}
In Step~2 we derived the planner's predictive law for the terminal statistic. We now specify the belief dynamics it implies, in the Gaussian form of Steps~1 and~2.

In the local scaling of Step~2 the uniform prior on $M^+$ is a flat prior $p(\theta)\propto1$ on $\theta$, and the Gaussian predictive of Step~2 reads $\widehat X_\tau\mid\theta\sim N(\theta\tau,\tau)$ at the current date.
Updating the flat prior via Bayes' rule and using that the path $\F_\tau$ informs $\theta$ only through its endpoint $\widehat X_\tau$ gives the posterior density: 
\[p(\theta \mid \F_\tau) \;\propto\; \exp\left(-\frac{(\widehat X_\tau - \theta\tau)^2}{2\tau}\right) \;=\; \exp\left(-\frac{(\theta - \widehat X_\tau/\tau)^2}{2(1/\tau)}\right).
\]
This kernel implies the Normal posterior $\theta \mid \F_\tau \sim N\left(\frac{\widehat X_\tau}{\tau},\, \frac{1}{\tau}\right)$, so $\tau$ is the posterior precision. Thus, the planner choosing the leading alternative expects to be correct with probability $\Prob(\theta > 0 \mid \F_\tau) = \Phi(Z_\tau)$, leaving a residual error risk of $\Phi(-|Z_\tau|)$ upon stopping.
 
The conditional mean and variance of $\widehat X_{\tau'}$ given $\F_\tau$ were computed in \eqref{eq:predictive-moments}: they are $\widehat X_\tau\tau'/\tau$ and $(\tau'-\tau)\tau'/\tau$, up to $1+O(1/N)$. (The decomposition into drift uncertainty $(\tau'-\tau)^2/\tau$ and fresh noise $\tau'-\tau$ is the Gaussian reading of the same two moments.) Dividing by
$\sqrt{\tau'}$,
\begin{equation}
\label{eq:predictive}
Z_{\tau'}\mid\F_\tau\ \sim\
N\!\left(Z_\tau\sqrt{\frac{\tau'}{\tau}},\ \frac{\tau'-\tau}{\tau}\right).
\end{equation}
This has the following implications: 

\emph{(i) The belief is a martingale.} For any normal variable with mean $m$ and variance $s^2$, the expectation of its standard normal CDF is $\Phi\left(m / \sqrt{1+s^2}\right)$. Applying this to \eqref{eq:predictive} yields:
\begin{equation}
\label{eq:belief-mg}
\mathbb{E}\left[\Phi(Z_{\tau'})\mid\F_\tau\right] = \Phi\left(\frac{Z_\tau \sqrt{\tau'/\tau}}{\sqrt{\tau'/\tau}}\right) = \Phi(Z_\tau), \qquad\text{i.e.},\qquad \mathbb{E}[\mu_{\tau'}\mid\F_\tau] = \mu_\tau.
\end{equation}

\emph{(ii) Capped variance of belief updates.} 
Recall that the posterior mean of $\theta$ is $m_\tau:=\widehat X_\tau/\tau$. The variance it accumulates between the present $\tau$ and the deadline
$\tau_T$ is
\begin{equation}
\label{eq:budget}
\text{Var}\bigl(m_{\tau_T}-m_\tau \,\big|\, \F_\tau\bigr)
\;=\;\frac1\tau-\frac1{\tau_T}
\;=\;\frac{\Delta\tau}{\tau\,\tau_T},
\end{equation}
a quantity known at $\tau$ and vanishing as $\tau\to\tau_T$. The leader is overturned only if $m$ changes sign, so the relevant state variable compares the estimate's distance from zero with the remaining movement available to it:
\begin{equation}
\label{eq:beta-def}
\beta \;:=\; \frac{|m_\tau|}{\sqrt{1/\tau - 1/\tau_T}} \;=\; |Z_\tau|\sqrt{\frac{\tau_T}{\Delta\tau}},
\end{equation}
measuring the current lead in standard deviations of remaining learning. Since $m_{\tau_T} \mid \F_\tau \sim N\left(m_\tau,\, \frac{\Delta\tau}{\tau\tau_T}\right)$, the probability that the deadline verdict reverses the current leader is exactly $\Phi(-\beta)$. By \eqref{eq:Z-X-app}, $2\ell = Z_\tau^2 = \beta^2\,\Delta\tau/\tau_T$, and by \eqref{eq:clock-app} a direct computation gives the statistic in \eqref{eq:approx_rule_adaptive}:
\begin{equation}
\label{eq:beta-identity}
\beta^2=\frac{(d_t/s_t)^2}{1/s_t-1/\hat s_T}.
\end{equation}

\paragraph{Step 4: Bracketing the optimal boundary.}
We now derive upper and lower bounds on the continuation region of $\phi^*$, by comparing the payoff from stopping with that of candidate continuation strategies. Evaluating such strategies requires the belief martingale property of Lemma~\ref{lem:martingale} at random stopping times rather than fixed dates.

Because the belief process $\mu_u\in[0,1]$, $u=t,\dots,T$, is a bounded martingale under the planner's filtration (Lemma~\ref{lem:martingale}) and every stopping time satisfies $\sigma\le T$, Doob's optional sampling theorem applies \citep[e.g.,][Thm.~1.3.22]{karatzas1991}. Combined with the tower property, this yields, for any stopping time $\sigma$ with values in $\{t,\dots,T\}$, writing $\mu_{\tau_T}$ for the deadline belief,
\begin{equation}
\label{eq:doob-stop}
\E\bigl[\mu_{\tau_T} \mid \F_\sigma\bigr] = \mu_\sigma,
\qquad\text{and}\qquad
\E\bigl[\mu_\sigma \mid \F_\tau\bigr] = \mu_\tau .
\end{equation}

To evaluate candidate continuation strategies, we isolate the maximal option value of future learning over $[\tau, \tau_T]$. Let $\mathcal{G}_N := \mathbb{E}\left[\max\left\{0, 1 - 2\mu_{\tau_T}\right\} \;\middle|\; \mathcal{F}_\tau\right]$ denote the normalized expected gain from terminal leadership reversals. Thus, the maximum economic option value of waiting until the deadline is $R\mathcal{G}_N$.

\begin{lemma}
\label{lem:gain-bound}
Suppose $\mu_\tau \ge \tfrac{1}{2}$. For every stopping time $\tau' \in [\tau, \tau_T]$,
\begin{equation}
\label{eq:gain-bound-ineq}
\mathbb{E}\bigl[\max\{\mu_{\tau'}, 1-\mu_{\tau'}\} \mid \mathcal{F}_\tau\bigr] - \mu_\tau \;\le\; \mathcal{G}_N,
\end{equation}
with equality holding for $\tau' \equiv \tau_T$.
\end{lemma}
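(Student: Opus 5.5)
The argument rests on one identity. Since $\max\{x,1-x\} = x + \max\{0,1-2x\}$ for all $x\in[0,1]$, we can write, for any stopping time $\tau'\in[\tau,\tau_T]$,
\[
\mathbb{E}\bigl[\max\{\mu_{\tau'},1-\mu_{\tau'}\}\mid\mathcal F_\tau\bigr]-\mu_\tau
=\Bigl(\mathbb{E}[\mu_{\tau'}\mid\mathcal F_\tau]-\mu_\tau\Bigr)+\mathbb{E}\bigl[\max\{0,1-2\mu_{\tau'}\}\mid\mathcal F_\tau\bigr].
\]
By the second identity in \eqref{eq:doob-stop} (optional sampling for the bounded belief martingale of Lemma~\ref{lem:martingale}), the first bracket vanishes. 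The claim is therefore equivalent to
\[
\mathbb{E}\bigl[\max\{0,1-2\mu_{\tau'}\}\mid\mathcal F_\tau\bigr]\le\mathcal G_N .
\]

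We prove this by conditional Jensen at the stopping time $\tau'$. The map $g(x)=\max\{0,1-2x\}$ is convex on $[0,1]$. The first identity in \eqref{eq:doob-stop} gives $\mu_{\tau'}=\mathbb{E}[\mu_{\tau_T}\mid\mathcal F_{\tau'}]$. Hence
\[
g(\mu_{\tau'}) \le \mathbb{E}\bigl[g(\mu_{\tau_T})\mid\mathcal F_{\tau'}\bigr]
\quad\text{almost surely.}
\]
Since $\tau'\ge\tau$, we have $\mathcal F_\tau\subseteq\mathcal F_{\tau'}$. Taking $\mathbb{E}[\,\cdot\mid\mathcal F_\tau]$ and applying the tower property yields
\[
\mathbb{E}[g(\mu_{\tau'})\mid\mathcal F_\tau]\le\mathbb{E}[g(\mu_{\tau_T})\mid\mathcal F_\tau]=\mathcal G_N,
\]
which is \eqref{eq:gain-bound-ineq}. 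For $\tau'\equiv\tau_T$ the Jensen step holds trivially with equality, so the bound is attained.

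The hypothesis $\mu_\tau\ge\tfrac12$ enters only through the normalization. It ensures that $\mu_\tau=\max\{\mu_\tau,1-\mu_\tau\}$, so the left-hand side of \eqref{eq:gain-bound-ineq} is exactly the gain from waiting over stopping now. The case $\mu_\tau<\tfrac12$ follows by the relabeling symmetry, with $g$ replaced by $\max\{0,2x-1\}$.

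The only point needing care is the measure-theoretic bookkeeping. We must check that optional sampling applies in the form \eqref{eq:doob-stop}, which holds because the horizon is bounded and the beliefs lie in $[0,1]$. We must also check that the deadline belief $\mu_{\tau_T}$ is the terminal value of the same martingale under the planner's filtration. Once both are granted, the lemma is the standard fact that a convex function of a martingale is a submartingale, applied between $\tau'$ and the deadline.
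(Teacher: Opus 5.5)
Your proposal is correct and matches the paper's proof step for step: the payoff identity $\max\{x,1-x\}=x+\max\{0,1-2x\}$, optional sampling \eqref{eq:doob-stop} to eliminate the $\mathbb{E}[\mu_{\tau'}\mid\mathcal F_\tau]-\mu_\tau$ term, then conditional Jensen for the convex map $x\mapsto\max\{0,1-2x\}$ at $\mathcal F_{\tau'}$ followed by the tower property. Your side remark is also accurate: the inequality itself does not use $\mu_\tau\ge\tfrac12$; that hypothesis only lets the left-hand side be read as the gain from waiting over stopping now.
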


\begin{proof}
Using the payoff identity $\max\{x, 1-x\} = x + \max\{0, 1-2x\}$ for $x \in [0,1]$, the expected payoff under policy $\tau'$ expands to:
\begin{align*}
\mathbb{E}\bigl[\max\{\mu_{\tau'}, 1-\mu_{\tau'}\} \mid \mathcal{F}_\tau\bigr] 
&= \mathbb{E}[\mu_{\tau'} \mid \mathcal{F}_\tau] + \mathbb{E}\left[\max\left\{0, 1 - 2\mu_{\tau'}\right\} \;\middle|\; \mathcal{F}_\tau\right] \\
&= \mu_\tau + \mathbb{E}\left[\max\left\{0, 1 - 2\mu_{\tau'}\right\} \;\middle|\; \mathcal{F}_\tau\right],
\end{align*}
where the second equality follows from \eqref{eq:doob-stop}. Because $x \mapsto \max\{0, 1-2x\}$ is convex, applying conditional Jensen's inequality to \eqref{eq:doob-stop} yields:
\begin{equation}
\label{eq:jensen-bound}
\max\left\{0, 1 - 2\mu_{\tau'}\right\}
= \max\left\{0, 1 - 2\mathbb{E}[\mu_{\tau_T} \mid \mathcal{F}_{\tau'}]\right\}
\;\le\;
\mathbb{E}\left[\max\left\{0, 1 - 2\mu_{\tau_T}\right\} \;\middle|\; \mathcal{F}_{\tau'}\right].
\end{equation}
Taking conditional expectations $\mathbb{E}[\,\cdot \mid \mathcal{F}_\tau]$ on both sides of \eqref{eq:jensen-bound} and applying the tower property yields $\mathbb{E}\left[\max\{0, 1-2\mu_{\tau'}\} \mid \mathcal{F}_\tau\right] \le \mathcal{G}_N$. Subtracting $\mu_\tau$ establishes \eqref{eq:gain-bound-ineq}. Equality for $\tau' \equiv \tau_T$ holds by definition of $\mathcal{G}_N$.
\end{proof}

Note that any continuation strategy costs at least $\kappa$, while by Lemma~\ref{lem:gain-bound} its expected gain over stopping now is at most $R\,\mathcal{G}_N$. Hence, if $R\,\mathcal{G}_N<\kappa$, stopping is optimal. If, however, the maximal cost of waiting is lower than the gain of waiting until the end, so if $R\,\mathcal{G}_N>\kappa(T-t)$, continuing is optimal. Both implications hold exactly for $\phi^*$.

We express $\mathcal G_N$ through the current evidence score $\beta$. Let $\mathcal G(\beta,r)$ denote the Gaussian option value, in which $\mu_{\tau_T}=\Phi(Z_{\tau_T})$ with $Z_{\tau_T}\sim N(\beta\sqrt r,\,r)$ and $r=\Delta\tau/\tau$.

\begin{lemma}
\label{lem:G-approx}
Uniformly on $\{(d,s,t)\in\mathcal S_\varepsilon:\ \beta^2\le 4\log\rho+2\}$,
\begin{equation}
\label{eq:GN-approx}
\mathcal G_N=\mathcal G\bigl(\beta,\Delta\tau/\tau\bigr)\bigl(1+o(1)\bigr).
\end{equation}
\end{lemma}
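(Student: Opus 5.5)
The plan is to compare the exact quantity $\mathcal G_N=\E[\max\{0,1-2\mu_{\tau_T}\}\mid\F_\tau]$ with its Gaussian counterpart in two layers. First, replace the exact deadline belief $\mu_{\tau_T}$ by $\Phi(Z_{\tau_T})$ using \eqref{eq:mu-Phi}. Second, replace the exact predictive law of $Z_{\tau_T}$ given $\F_\tau$ by the Gaussian law \eqref{eq:predictive} with $\tau'=\tau_T$. Note that \eqref{eq:predictive} gives mean $Z_\tau\sqrt{\tau_T/\tau}$ and variance $\Delta\tau/\tau$. Using $Z_\tau=\beta\sqrt{\Delta\tau/\tau_T}$, the mean equals $\beta\sqrt{\Delta\tau/\tau}=\beta\sqrt r$, which matches the definition of $\mathcal G(\beta,r)$. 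A preliminary step handles the random terminal turnout. Condition on the total future arrivals $a$; the conditional law of $\widehat X_{s+a}$ is given by \eqref{eq:predictive-moments}. Since $s_T-\hat s_T=O(\sqrt N)$ while $\tau_T$ is bounded away from $0$ and $\infty$ on $\mathcal S_\varepsilon$, the induced perturbation of $r$ is $O(N^{-1/2})$. Because $\mathcal G$ is smooth in $r$, with $\partial_r\log\mathcal G=O(\beta^2)=O(\log\rho)$, and $\log\rho\le\varepsilon N^{1/3}$, averaging over $a$ costs only a factor $1+o(1)$.

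The core is that $\mathcal G_N$ is a \emph{relative} quantity that may be as small as about $\Phi(-\beta)\approx\rho^{-2}$ when $\beta^2\le 4\log\rho+2$. Additive errors like the $O(N^{-1/2})$ term in \eqref{eq:mu-Phi} are therefore useless globally, and I would split the expectation by the value of $Z_{\tau_T}$.

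\textbf{Reversal region.} The integrand $\max\{0,1-2\mu\}$ is nonzero only when $Z_{\tau_T}<0$ (up to $O(N^{-1/2})$ in $Z$). This event lies roughly $\beta$ predictive standard deviations below the predictive mean. That is a distance of order $\sqrt{\log\rho}$, hence within the window $|m-\hat m|\le C_\varepsilon\sigma\sqrt{\log\rho}$ of \eqref{eq:pmf-window} once $C_\varepsilon^2>4$. On this window the Beta-Binomial predictive for $a^+$ has mass function equal to the Gaussian density times $1+o(1)$ pointwise; this is the Step~2 remark that \eqref{eq:pmf-window} holds verbatim for $a^+$. The relative Laplace error there was shown to be $O((\log\rho)^2/N)=o(1)$.

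\textbf{Integrand near the tie.} Inside the reversal region, $1-2\mu_{\tau_T}$ must also be compared relatively with $1-2\Phi(Z_{\tau_T})$. Where $|Z_{\tau_T}|\gg N^{-1/2}$, the relative version of \eqref{eq:mu-Phi} gives a factor $1+o(1)$. The sliver $|Z_{\tau_T}|\lesssim N^{-1/2}$ contributes $O(N^{-1/2})$ times its probability. This is negligible relative to the bulk of the reversal mass, which sits at $|Z_{\tau_T}|$ of order $1$ up to $\beta$-dependent factors.

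\textbf{Lattice error.} Replacing the lattice sum by the Gaussian integral costs at most the largest term. Each term is $O(\sigma^{-1})$ relative to a mass spread over $\Theta(\sigma)$ lattice points, so this error is also relative $o(1)$.

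\textbf{Far tail.} Outside the window, the log-concavity bound gives predictive mass at most $\rho^{-C_\varepsilon^2/3}$. The integrand is bounded by $1$. Meanwhile $\mathcal G(\beta,r)\ge c\,\Phi(-\beta)/\mathrm{poly}(\beta)\gtrsim\rho^{-2}/\mathrm{poly}(\log\rho)$. Choosing $C_\varepsilon^2>6$ makes the tail $o(\mathcal G)$, and the same choice bounds the Gaussian tail. Combining the four pieces gives $\mathcal G_N=\mathcal G(\beta,r)(1+o(1))$ uniformly on the stated set.

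The main obstacle is making the Laplace approximation uniformly \emph{relative} in the deep tail at distance $\sim 2\sqrt{\log\rho}$ from the predictive mode. There the cubic correction to the exponent is of size $\beta^3/\sqrt N$ in absolute terms and must vanish. I would control it exactly as in Step~1, using the skewness bound and the growth condition $\log\rho\le\varepsilon N^{1/3}$, so that $(\log\rho)^{3/2}/\sqrt N\to0$. If that estimate is too crude for the convolution over $a$, the fallback is to apply the Gaussian comparison conditionally on each $a$ in an $O(\sqrt N\log N)$ window of $\Lambda(T)-\Lambda(t)$, and to discard the Poisson tails using their super-polynomial decay.
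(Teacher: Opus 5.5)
Your architecture matches the paper's. You condition on the number $a$ of arrivals, cut off the far tail of $Z_{\tau_T}$ with the log-concavity bound against $\mathcal G\gtrsim\rho^{-2}/\log\rho$, make the additive $O(N^{-1/2})$ of \eqref{eq:mu-Phi} negligible relative to $\mathcal G$, and use the local Gaussian form of the Beta-Binomial predictive on the window. The paper compares distribution functions after an integration by parts rather than mass functions plus a lattice correction, but that difference is immaterial.

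The problem is in the step you single out as the crux. In your reversal-region paragraph you cite a relative Laplace error $O((\log\rho)^2/N)$. In the main-obstacle paragraph, however, you size the cubic correction at distance $\beta\approx 2\sqrt{\log\rho}$ as $\beta^3/\sqrt N$ and assert $(\log\rho)^{3/2}/\sqrt N\to0$ under $\log\rho\le\varepsilon N^{1/3}$. That limit is false: the hypothesis only gives $(\log\rho)^{3/2}/\sqrt N\le\varepsilon^{3/2}$, a fixed constant. At the edge of the admissible regime your estimate therefore delivers $\mathcal G_N=\mathcal G\,e^{O(\varepsilon^{3/2})}$, not $\mathcal G(1+o(1))$. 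A generic skewness of order $N^{-1/2}$ would require $\log\rho=o(N^{1/3})$. (A bounded factor would still be absorbed by $K_\varepsilon$ in Step~5, but that is weaker than the lemma.)

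The missing idea is the near-symmetry of the predictive. The third cumulant of $\mathrm{BetaBinomial}(a,s^++1,s^-+1)$ is proportional to $s^--s^+=-d$. Substituting $d^2=\beta_a^2\,sa/(s+a)$ into the formula behind \eqref{eq:exact-skewness}, with $N-s$ replaced by $a$, gives skewness $\frac{2\beta_a(s+2a)}{s(s+a)}(1+o(1))=O(\beta/N)$ on $\mathcal S_\varepsilon$. The cubic term at $\beta$ standard deviations is then $O(\beta^4/N)=O((\log\rho)^2/N)=O(\varepsilon^2N^{-1/3})\to0$. Only this estimate, not the one you wrote, closes the argument, and it is what justifies the rate you cited earlier.

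Your averaging over $a$ has the right order of magnitude: the relative effect is $O(\log\rho/\sqrt N)\to0$. It must, however, be an exponential-moment statement with a truncation, and the truncation width matters. Let $\bar a:=\hat s_T-s$. Since $\beta_a^2=d^2/s+d^2/a$, the factor $e^{-\beta_a^2/2}$ in the option value differs from $e^{-\beta^2/2}$ by $e^{\zeta(a-\bar a)}$, with $\zeta=d^2/(2\bar a^2)=O(\log\rho/N)$. This is exponential in the Poisson fluctuation, so it cannot be treated as a deterministic $O(\sqrt N)$ shift of $r$.

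The paper truncates at $|a-\bar a|\le N^{3/4}$. On that event no censoring occurs, and the discarded Poisson mass $e^{-cN^{1/2}}$ is negligible against $\rho^{-3}$ precisely because $\log\rho=o(N^{1/2})$. It then integrates the factor exactly: $\mathbb E\,e^{\zeta(a-\bar a)}=\exp\{\bar a(e^{\zeta}-1-\zeta)\}=1+o(1)$. Your fallback does not work. An $O(\sqrt N\log N)$ window with Poisson tails discarded by their super-polynomial decay leaves discarded mass of order $e^{-c(\log N)^2}$. That is not $o(\rho^{-2})$ once $\log\rho\gg(\log N)^2$, a regime the hypothesis $\log\rho\le\varepsilon N^{1/3}$ allows.
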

 
\begin{proof}
Fix $C=C_\varepsilon\ge3$ in the windows of Steps~1 and~2 large enough that $C_\varepsilon^2\ge 9\sup_{\mathcal S_\varepsilon}\Delta\tau/\tau$. The cutoff $Z_{\tau_T}<-C\sqrt{\log\rho}$ below then lies at least $3\sqrt{\log\rho}$ standard deviations of $Z_{\tau_T}$ below its mean, and the window of Step~2, taken as $C(1+\sup_{\mathcal S_\varepsilon}(\tau/\Delta\tau)^{1/2})\sqrt{\log\rho}$ standard deviations, contains the range $[-C\sqrt{\log\rho},0]$ of $Z_{\tau_T}$. Condition first on the number $a$ of arriving votes and let $\tau_a:=\tau(s+a)$, $\beta_a:=|Z_\tau|\sqrt{\tau_a/(\tau_a-\tau)}$. The integrand $\max\{0,1-2\mu_{\tau_T}\}$ vanishes when $Z_{\tau_T}\ge0$. On $Z_{\tau_T}<-C\sqrt{\log\rho}$ it is at most one, and the event has probability at most $\rho^{-4}$ by Step~2 and the choice of $C_\varepsilon$. On the remaining range $-C\sqrt{\log\rho}\le Z_{\tau_T}<0$ it equals $2(\Phi(-Z_{\tau_T})-\tfrac12)(1+o(1))+O(N^{-1/2})$ by \eqref{eq:mu-Phi}. Writing $h(z):=\max\{2(\Phi(-z)-\tfrac12),0\}$, integration by parts gives $\E[h(Z_{\tau_T})\mid a]=\int_{-\infty}^0 2\phi(z)\Prob(Z_{\tau_T}\le z\mid a)\,dz$, and inserting the relative-error statement of Step~2 shows that this equals its Gaussian counterpart $\mathcal G(\beta_a,(\tau_a-\tau)/\tau)$ up to a relative $o(1)$ plus an additive $O(N^{-1/2}\phi(\beta_a))$. Since $\mathcal G\asymp\sqrt{\Delta\tau/\tau}\,\phi(\beta)/(1+\beta^2)$ by \eqref{eq:G-asym}, and $\phi(\beta_a)\ge\rho^{-3}$ on the stated range, all additive errors are $o(\mathcal G)$ because $\rho^{-3}\gg\rho^{-4}$ and, since $\log\rho=o(N^{1/2})$, $\rho^{-3}\gg e^{-cN^{1/2}}$. We now average over $a$. The arrivals after $t$ are independent Poisson variables with total mean $\bar a:=\hat s_T-s\asymp N$, so $\Prob(|a-\bar a|>N^{3/4})\le e^{-cN^{1/2}}$, which is negligible by the same comparison, and on $|a-\bar a|\le N^{3/4}$ no censoring occurs. On that event $\tau_a=\tau_T(1+O(N^{-1/4}))$, so the prefactor of $\mathcal G$ moves by a relative $O(N^{-1/4})$, while the exponent needs exact expansion: from $\beta_a^2=d^2/s+d^2/a$,
\[
e^{-\beta_a^2/2}
=e^{-\beta^2/2}\,e^{\zeta(a-\bar a)}\,e^{-d^2(a-\bar a)^2/(2\bar a^2 a)},
\qquad
\zeta:=\frac{d^2}{2\bar a^2}=O\Bigl(\frac{\log\rho}{N}\Bigr),
\]
where the last factor is $1+O(\log\rho\,N^{-1/2})=1+o(1)$ on the event, and $\E[e^{\zeta(a-\bar a)}]=\exp\{\bar a(e^{\zeta}-1-\zeta)\}=\exp\{O((\log\rho)^2/N)\}=1+o(1)$ by the Poisson moment generating function.
\end{proof}

Next, we express $\mathcal{G}$ as a function of the current evidence score $\beta$. Evaluating \eqref{eq:predictive} at $\tau'=\tau_T$ using \eqref{eq:beta-def},
\begin{equation}
\label{eq:z-tau}
    Z_{\tau_T}\mid\F_\tau\ \sim\
N\!\left(\beta\sqrt{\tfrac{\Delta\tau}{\tau}},\ \tfrac{\Delta\tau}{\tau}\right),
\end{equation}
and since $\mu_{\tau_T}=\Phi(Z_{\tau_T})$, we can write $\mathcal{G}=\mathcal{G}(\beta,\Delta\tau/\tau)$. Moreover, $\mathcal{G}(\beta,.)$ is continuous and strictly decreasing in $\beta$, with $\mathcal{G}\le\Phi(-\beta)\to0$ as $\beta\to\infty$ and $\mathcal{G}(0,\cdot)>0$.\footnote{Immediate from the integral representation \eqref{eq:G-integral}.}. For every $c\in[1,T]$ and all $\rho$ such that $c/\rho < \mathcal{G}(0, \Delta\tau/\tau)$, the equation
\begin{equation}
\label{eq:root-general}
\mathcal{G}\bigl(\beta_c(t),\,\Delta\tau/\tau\bigr)=\frac{c}{\rho}
\end{equation}
has a unique root $\beta_c(t)$. In this notation, the two benchmarks read: stopping is optimal whenever $\beta\ge\beta_1(t)$, and continuing whenever $\beta<\beta_{T-t}(t)$. Since $\beta_c(t)$ decreases in $c$ and $T-t\ge1$, the cutoffs are ordered $\beta_{T-t}(t)\le\beta_1(t)$, and
\begin{equation}
\label{eq:bracket-app}
\bigl\{\beta<\beta_{T-t}(t)\bigr\}
\;\subseteq\;\text{continuation region}
\;\subseteq\;\bigl\{\beta<\beta_1(t)\bigr\},
\end{equation}
Moreover, $\mathcal G_N$ is non-increasing in $d$ at fixed $(s,t)$: conditional on the arrivals, raising $s^+$ by one and lowering $s^-$ by one moves the predictive law $\mathrm{BetaBinomial}(a,s^++1,s^-+1)$ of Lemma~\ref{lem:bayesian_update} up in the likelihood-ratio order (the ratio of mass functions is $(k+s^++1)/(a-k+s^-)$, increasing in $k$), hence $\mu_{\tau_T}$ increases stochastically by the monotone likelihood ratio property \citep{Karlin1956}, and $\max\{0,1-2\mu_{\tau_T}\}$ is non-increasing in $\mu_{\tau_T}$. Since $\beta$ increases with $d$ at fixed $(s,t)$ by \eqref{eq:beta-identity}, the map $\beta\mapsto R\mathcal G_N$ is non-increasing, so the stopping region of $\phi^*$ is an up-set $\{\beta\ge\beta^*(s,t)\}$: the optimal policy is a threshold rule in $\beta$, with cutoff placed by the exact bracket \eqref{eq:bracket-app} in $[\beta_{T-t}(t),\beta_1(t)]$, \ie $\beta_{T-t}(t)\le\beta^*(t)\le\beta_1(t)$.\footnote{The $(1+o(1))$ shifts each root by $o(1)$ in $\beta^2$ (Step~5), which the band $K_\varepsilon$ absorbs.} 
 
This monotonicity also extends the inclusion \eqref{eq:bracket-app} from the window to all of $\mathcal S_\varepsilon$. By \eqref{eq:beta-identity} a step of two in $d$ changes $\beta^2$ by at most $(4\beta+4)/(\varepsilon\sqrt N)+4/(\varepsilon^2N)\le1$ on the window, so any state with $\beta^2>4\log\rho+2$ has $\mathcal G_N$ no larger than at a state with $\beta^2\in(4\log\rho+1,4\log\rho+2]$ at the same $(s,t)$, where stopping is optimal. Hence the right inclusion in \eqref{eq:bracket-app} holds on all of $\mathcal S_\varepsilon$, and the left one trivially so.
 
\paragraph{Step 5: Evaluating the bounds.}

Both ends of \eqref{eq:bracket-app} are roots of \eqref{eq:root-general}, at $c=1$ and $c=T-t$, respectively. Therefore, we solve \eqref{eq:root-general} asymptotically as $\rho\to\infty$, for fixed $c\in[1,T]$, by first deriving the leading-order size of $\mathcal{G}$ for large $\beta$, and then inverting.

To evaluate $\mathcal{G} = \mathbb{E}\left[\max\{0, 1 - 2\mu_{\tau_T}\} \mid \mathcal{F}_\tau\right]$, note that $1 - 2\mu_{\tau_T} = 2\bigl(\frac{1}{2} - \Phi(Z_{\tau_T})\bigr)$ is non-zero only on reversal paths where $Z_{\tau_T} < 0$. Following from \eqref{eq:z-tau}, we can write $Z_{\tau_T} = \sqrt{\Delta\tau/\tau}(\beta + \xi)$ with $\xi \sim N(0,1)$, such that $Z_{\tau_T} < 0$ corresponds to $\xi < -\beta$. Integrating over these paths yields the explicit representation:
\begin{equation}
\label{eq:G-integral}
\mathcal{G}\left(\beta, \, \frac{\Delta\tau}{\tau}\right) \;=\; 2 \int_{-\infty}^{-\beta} \left[ \frac{1}{2} - \Phi\bigl(\sqrt{\Delta\tau/\tau}(\beta + \xi)\bigr) \right] \phi(\xi) \, d\xi .
\end{equation}
 
We re-center \eqref{eq:G-integral} at the edge of the reversal region. Substituting $\xi=-\beta-u$ with $u>0$, the density at
$\xi$ factorizes as $\phi(\beta+u)=\phi(\beta)e^{-\beta u-u^2/2}$ and
$Z_{\tau_T}=-\sqrt{\Delta\tau/\tau}\,u$, so
\[
\mathcal{G}=2\int_0^\infty
\Bigl[\Phi\bigl(\sqrt{\Delta\tau/\tau}\;u\bigr)-\tfrac12\Bigr]
\phi(\beta)\,e^{-\beta u-u^2/2}\,du .
\]
Split at $u_0=\beta^{-1/2}$. On $u>u_0$ the integrand is at
most $\tfrac12\phi(\beta)e^{-\beta u}$, contributing
$o(\phi(\beta)\beta^{-2})$ uniformly. On $u\le u_0$,
$e^{-u^2/2}=1+O(\beta^{-1})$ and, by a second-order Taylor bound with
$\sup_z|z\phi(z)|\le1$,
$|\Phi(\sqrt{\Delta\tau/\tau}u)-\tfrac12-\phi(0)\sqrt{\Delta\tau/\tau}u|\le\tfrac12\,(\Delta\tau/\tau)\,u^2\le\tfrac12 C_\varepsilon u^2$ with $C_\varepsilon:=\sup_{\mathcal S_\varepsilon}\Delta\tau/\tau$, so the linearization
error contributes $O(\phi(\beta)\beta^{-3})$, again uniformly. Both
corrections are of relative order $O(\beta^{-1})=O((\log\rho)^{-1/2})$.
Using $\int_0^\infty u\,e^{-\beta u}\,du=\beta^{-2}$ and
$\phi(0)=1/\sqrt{2\pi}$,
\begin{equation}
\label{eq:G-asym}
\mathcal{G}=2\phi(\beta)\,\phi(0)\sqrt{\tfrac{\Delta\tau}{\tau}}\cdot
\frac{1}{\beta^{2}}\bigl(1+o(1)\bigr)
=\sqrt{\frac{2\Delta\tau}{\pi\,\tau}}\;\frac{\phi(\beta)}{\beta^{2}}
\bigl(1+o(1)\bigr).
\end{equation}
The prefactor $\beta^{-2}$ refines Step~3(ii): the reversal probability $\Phi(-\beta)$ carries only $\beta^{-1}$, so the option value is smaller by a factor of
order $\beta$.
 
With $\sqrt{2/\pi}\cdot(2\pi)^{-1/2}=1/\pi$,
$\mathcal{G}=\pi^{-1}\sqrt{\Delta\tau/\tau}\,e^{-\beta^2/2}\beta^{-2}(1+o(1))$.
Setting $\mathcal{G}=c/\rho$ and taking logarithms,
\[
-\log\pi+\tfrac12\log\tfrac{\Delta\tau}{\tau}
-\tfrac{\beta^2}{2}-2\log\beta+o(1)=\log c-\log\rho,
\]
that is, multiplying by $-2$,
\begin{equation}
\label{eq:implicit}
\beta^2+2\log\beta^2
=2\log\rho-2\log c+\log\tfrac{\Delta\tau}{\tau}-2\log\pi+o(1).
\end{equation}
The logarithm is of smaller order than $\beta^2$, so a first pass gives
$\beta^2=2\log\rho\,(1+o(1))$, hence
$\log\beta^2=\log2+\log\log\rho+o(1)$. Returning this to
\eqref{eq:implicit},
\begin{equation}
\label{eq:two-roots}
\beta_c(t)^2
=2\log\rho-2\log\log\rho
-2\log c+\log\tfrac{\Delta\tau}{\tau}-2\log\pi-2\log2+o(1).
\end{equation}
Substituting \eqref{eq:two-roots} back into the logarithm changes $\log\beta^2$ by $O(\log\log\rho/\log\rho)=o(1)$, so a further iteration alters only the $o(1)$ term and the displayed terms are exact. The expansion is uniform over $c\in[1,T]$ and over $\mathcal S_\varepsilon$, since each root diverges like $\sqrt{2\log\rho}$ and $\Delta\tau/\tau$ is bounded away from $0$ and $\infty$ by \eqref{eq:clock-app}. Evaluating \eqref{eq:two-roots} at $c=1$ and
$c=T-t$, we obtain
\begin{equation}
\label{eq:final-bracket}
\bigl[\beta_1(t)\bigr]^2-\bigl[\beta_{T-t}(t)\bigr]^2
=2\log(T-t)+o(1)\;\le\;2\log T+o(1),
\end{equation}
which is $O(1)$ because $T$ is fixed. The remaining terms in \eqref{eq:two-roots} are likewise bounded uniformly over $\mathcal S_\varepsilon$, so both ends of the bracket equal $L_\rho+O(1)$, where $L_\rho:=2\log\rho-2\log\log\rho$. Write $K_\varepsilon$ for a bound on this $O(1)$, depending only on $\varepsilon$ and $T$. By \eqref{eq:bracket-app}, $\phi^*$ stops at every state of $\mathcal S_\varepsilon$ with $\beta^2>L_\rho+K_\varepsilon$ and continues at every state with $\beta^2<L_\rho-K_\varepsilon$, while $\hat\phi$ stops exactly when $\beta^2\ge L_\rho$. Since $K_\varepsilon\le L_\rho\,\frac{\log\log\rho}{\log\rho}$ for $\rho\ge\rho_\varepsilon$, the two rules agree wherever $|\beta^2/L_\rho-1|>\log\log\rho/\log\rho$, which by \eqref{eq:beta-identity} gives the band of the theorem.

\subsection{Proof of Corollary~\ref{cor:calibration}}
\label{app:C1}
Write $\Delta\tau(t)/\tau_T$ for the factor in \eqref{eq:approx_rule_posted}. Fix $\bar t$ and let $\alpha$ be given by
\eqref{eq:cor-rk}, which implies
\begin{equation}
\label{eq:alpha-def}
\bigl[\Phi^{-1}(1-\alpha)\bigr]^{2}
=\bigl[2\log\rho-2\log\log\rho\bigr]\,\frac{\Delta\tau(\bar t)}{\tau_T}.
\end{equation}
Taking the ratio of the shape factor at $t$ and at $\bar t$, the common
$N/\Lambda(T)$ cancels and
\begin{equation}
\label{eq:shape-ratio}
\frac{\Delta\tau(t)/\tau_T}{\Delta\tau(\bar t)/\tau_T}
=\frac{\Lambda(T)-\Lambda(t)}{N-\Lambda(t)}\cdot
\frac{N-\Lambda(\bar t)}{\Lambda(T)-\Lambda(\bar t)},
\end{equation}
the factor appearing in \eqref{eq:cor-rule}. Multiplying
\eqref{eq:alpha-def} by \eqref{eq:shape-ratio} gives
$\bigl[2\log\rho-2\log\log\rho\bigr]\Delta\tau(t)/\tau_T$, the right-hand
side of \eqref{eq:approx_rule_posted}. Hence, \eqref{eq:cor-rule} and
\eqref{eq:approx_rule_posted} define the same stopping set for every $\bar t$ and
the reference date reparameterizes the rule without changing it.
 
By \eqref{eq:mu-Phi} the posterior probability that the leading alternative is the minority preference is $1-\mu=\Phi(-|Z|)$, and by \eqref{eq:Z-X-app} with \eqref{eq:approx_rule_posted} a decision at date $t$ requires $|Z|\ge\bigl\{[2\log\rho-2\log\log\rho]\Delta\tau(t)/\tau_T\bigr\}^{1/2}$. At $t=\bar t$ this lower bound equals $\Phi^{-1}(1-\alpha)$ by \eqref{eq:alpha-def}, so a decision taken at $\bar t$ carries posterior error probability at most $\Phi\bigl(-\Phi^{-1}(1-\alpha)\bigr)=\alpha$. Because $\Lambda$ is nondecreasing, $\Delta\tau(t)/\tau_T$ is nonincreasing, so at any earlier date the required evidence is weakly larger and the implied error weakly smaller. The bound holds for any decision reached at or before $\bar t$. 
\end{document}